\newcolumntype{L}{>{\centering\arraybackslash}m{3.6cm}}
\newtheorem*{remark}{Remark}
\newtheorem{definition}{Definition}
\newtheorem{proposition}{Proposition}
\newcommand{\bfone}{\boldsymbol{1}} 
\newcommand{\bfalpha}{\boldsymbol{\alpha}}
\newcommand{\bftheta}{\boldsymbol{\theta}}
\newcommand{\bfrho}{\boldsymbol{\rho}}
\newcommand{\sa}{\text{SA}}
\newcommand{\sr}{\text{SR}}
\newcommand{\ssr}{\text{SSR}}
\newcommand{\OR}{\text{OR}}
\newcommand{\rr}{\text{RR}}
\newcommand{\starr}{\text{STARR}}
\newcommand{\esgd}{\text{ESG-}}
\newcommand{\esg}{{ESG}} 
\newcommand{\cash}{\text{CASH}}
\newcommand{\esgm}{\text{esg}}
\newcommand{\E}{\mathbb{E}}
\newcommand{\avar}{\text{AVaR}}
\newcommand{\bfzero}{\boldsymbol{0}}
\newcommand{\esgavar}{\text{ESG-} \avar_{\lambda,\tau}}
\newcommand{\ftr}{\text{FTR}}
\newcommand{\rft}{\text{RF}_T^r}
\newcommand{\rftesg}{\text{RF}_T^{\esg}}
\newcommand{\Eb}{\mathbb{E}}
\newcommand{\Rb}{\mathbb{R}}
\newcommand{\Zc}{\mathcal{Z}}
\newcommand{\Xc}{\mathcal{X}}
\newcommand{\Ac}{\mathcal{A}}
\newcommand{\Lc}{\mathcal{L}}
\newcommand{\Fc}{\mathcal{F}}
\newcommand{\one}{\mathbbm{1}}
\newcommand{\cm}{\checkmark}
\begin{document}

\onehalfspacing

\title{An Axiomatic Risk-Reward Framework for Sustainable Investing}

\author[a]{Gabriele Torri\thanks{Corresponding author: \textit{gabriele.torri@unibg.it}}}
\author[a]{Rosella Giacometti}
\author[b]{Darinka Dentcheva}
\author[c]{Svetlozar T. Rachev}
\author[c]{W. Brent Lindquist}

\affil[a]{Department of Management, University of Bergamo, Bergamo 24127, Italy}
\affil[b]{School of Engineering and Science, Stevens Institute of Technology, Hoboken, NJ 07030, USA}
\affil[c]{Department of Mathematics and Statistics, Texas Tech University, Lubbock, TX 79409-1042, USA}

\date{\today}

\maketitle

\begin{abstract}
Continued interest in sustainable investing calls for an axiomatic approach to measures of risk and reward that focus not only on financial returns, but also on measures of environmental and social sustainability, i.e. environmental, social, and governance (ESG) scores. We propose axiomatic definitions for \textit{ESG-coherent risk measures} and \textit{ESG reward--risk ratios} based on functions of bivariate random variables that are applied to financial returns and real-time ESG scores, extending the traditional univariate measures to the ESG case. We provide examples, discuss the dual representation, and present an empirical analysis in which the ESG-coherent risk measures and ESG reward--risk ratios are used to rank stocks.

\textbf{Keywords:} risk measures, multivariate risk, ESG, sustainable investing
\end{abstract}




\section{Introduction}

ESG investing refers to the integration of environmental, social, and governance considerations into the asset allocation process. It has been one of the most significant trends in the asset management industry, due to continued focus on sustainability and to the growth of information related to non-financial impacts.

ESG investing encompasses a broad array of approaches, and its market practices are very heterogeneous, with different terminologies, definitions, and strategies.
These practices vary due to the cultural and ideological diversity of investors \citep{sandberg2009heterogeneity,widyawati2020systematic}. 
According to \cite{amel2018and}, the most significant motivation for incorporating ESG factors is related to financial performance, as sustainability factors are perceived as relevant to investment returns. That is, investors believe that ESG data can be used to identify potential risks and opportunities, and that such information is not yet fully incorporated into market prices. Hence, ESG information should help investors to control risk better and improve their financial performance.
In line with \cite{schanzenbach2020reconciling} we employ the expression \textit{risk-return ESG} to refer to investment strategies that use ESG factors to improve returns while lessening risk. Academic evidence on the role of ESG in enhancing performance is inconclusive. The meta-analysis conducted by \cite{revelli2015financial} shows that sustainable and responsible investing (SRI) is neither a weakness nor a strength compared with conventional investing. In contrast, \cite{friede2015esg} combining the findings of approximately 2200 studies from 1970 to 2014 find that large majority of of the considered literature reports a positive relationship between ESG and corporate financial performance. More recently, \cite{hornuf2024performance} found that responsible investment neither outperforms nor underperforms the market portfolio.\footnote{We note that, if ESG information is already incorporated in market prices, any strategy that restricts investment based on ESG criteria would result in sub-optimal allocation in terms of monetary performance.
	Moreover we underline that risk-return ESG strategies are not necessarily more sustainable than traditional ones. Indeed, an investor may implement a contrarian-ESG approach, investing in less sustainable assets if they are expected to outperform the market.}

A second motivation that guides ESG strategies is the desire to improve the sustainable profile of the portfolios for ethical reasons or to improve investors' green image \citep{amel2018and}.\footnote{Almel-Zadeh and Serafim reported additional motivations that push asset managers to propose ESG products, including growing client demand, the effectiveness of ESG investing in bringing about change in companies, and formal client mandates. For our purposes, these drivers can be attributed to either material or ethical motivations.} 
Sustainability then becomes part of the investment goals, alongside monetary performance and the riskiness of the position.
\cite{schanzenbach2020reconciling} refer to investment strategies that incorporate ESG screenings for moral or ethical reasons as \textit{collateral benefit ESG}, as they aim to provide benefits to a third party, rather than to improve risk-adjusted returns.
We refer to investors who include sustainability considerations for ethical reasons as \textit{ESG-oriented investors}, to distinguish them from \textit{risk-return investors} who care exclusively about the financial performance of a position.

From a theoretical point of view, risk-return ESG does not pose any specific issue, as ESG is treated as any other information (e.g. balance sheet data, macroeconomic indicators, sentiment analysis, etc.) and is integrated in the investment process without affecting the main goal of the investor: improving the risk-adjusted performance.
In contrast, the ethical motivations of ESG-oriented investors challenge many of the traditional assumptions adopted in finance theory, as the inclusion of other factors in the investment process (e.g. reduction of CO$_2$ emissions, respect of human right, exclusion of controversial sectors, etc.) are not necessarily motivated by the improvement of financial performance.

To illustrate how the ethical motivation breaks common investing principles, consider two stocks with the same reward--risk profile, but with different ESG scores. They may be equivalent  for a risk-return investor but not for an ESG-oriented investor, as one of the two companies may be more environmentally sustainable or may adopt stricter human right policies.
An ESG-oriented investor may decide to deliberately worsen their risk-adjusted performance to comply with non-negotiable principles, for instance by excluding certain sectors or companies, thus reducing the diversification of their portfolio.\footnote{It is common practice for sustainable investment funds to exclude tobacco, weapons, fossil fuels, and other controversial sectors. The divestment campaign aimed at South Africa's apartheid regime in the 1980s was one of the key turning points in the history of responsible investing \citep{schanzenbach2020reconciling}.}

The problem of designing optimal portfolio strategies for ESG-oriented investors that jointly account for financial performance and sustainability considerations has been studied in the portfolio optimization literature. This line of research typically extends the classical reward-risk framework by introducing a third dimension -- sustainability -- proxied by ESG scores. The scores are incorporated into the portfolio selection problem either as explicit constraints or as additional terms in the objective function. Such an approach has been pursued by, among others, \cite{utz2015tri}, \cite{gasser2017markowitz} and \cite{cesarone2022does} who studied the efficient frontier of non-dominated portfolios in the reward-risk-ESG space, extending the traditional optimal portfolio literature that started with \cite{markowitz1952portfolio}. These works aim to find portfolios with the highest expected return for a given level of risk, where risk is measured using the portfolio variance, a coherent risk measure such as the average value at risk (AVaR)\footnote{The AVaR is also known in the literature as Conditional Value at Risk (CVaR) or Expected Shortfall (ES)} \citep{rockafellar2000optimization}, or an asymmetric deviation measure \citep{giacometti2021tail}. \cite{pedersen2021responsible} further contributes to the literature by studying the effect on the market equilibrium: they build a framework with three classes of investors (\textit{ESG unaware}, that do not consider ESG scores; \textit{ESG-aware} that use sustainability scores to update their views on risk and expected return; and \textit{ESG motivated}, that have preferences for high ESG) and they consider an ESG-efficient frontier by considering the Sharpe ratio and the ESG score. The presence of sustainability-oriented investors in the market and they effect on asset prices has been studied also by \cite{heinkel2001effect} that investigated the effects of exclusionary ethical investments on corporate behaviour, finding that if the percentage of ethical investors is sufficiently high, the cost of capital for polluting companies may increase. More recently, \cite{pastor2021sustainable} built a two-factor model to study the market equilibrium in the presence of ESG investors, finding that, in equilibrium, sustainable assets have lower expected returns not only due to the fact that they can be used to hedge climate risk, but also because of the preference of ESG investors towards these assets.

One drawback shared by the modeling frameworks discussed so far is that they fail to take into account the stochasticity of ESG scores: using either the expected values of the scores or assuming that ESG scores can be treated as constants. Here, we aim to address this limitation and, more generally, to introduce a framework to jointly consider the financial and ESG dimensions.

In this work we target explicitly ESG-oriented investors, proposing measures that account for the joint distribution of financial returns and ESG scores. In particular, we propose axiomatic classes of measures that extend the concepts of coherent risk measures \citep{artzner1999coherent}, reward measures \citep{rachev2008desirable}, and reward-risk ratios \citep{cheridito2013reward}. We underline that our framework treats ESG as an explicit argument of investor preferences rather than as a return-predictive signal. Still, the approach can be integrated in asset allocation strategies that jointly incorporate ESG-based preferences and ESG-driven information for financial performance.

 Our axiomatic approach is based on the idea that risk comes from two drivers: monetary performance (i.e., the financial returns of a position) and sustainability (represented by the ESG score of the company).\footnote{For convenience, in the rest of this paper, we refer to these two dimensions as the \textit{monetary} and \textit{ESG} components of risk.} These quantities are random and not necessarily independent, and we can represent them as a bivariate random variable. Hence, it is natural to refer to the rich literature on multivariate risk measures \citep{jouini2004vector,hamel2009duality,wei2014coherent,ekeland2012comonotonic}. Such measures have been developed to study portfolios of non-perfectly fungible assets (e.g., assets valued in multiple currencies) or assets that are difficult to price. We propose using a bivariate risk measure to deal with a single asset that can be evaluated over two dimensions: the monetary returns and the sustainability (represented by ESG scores). We then define \textit{ESG-coherent risk measures} as an extension of the \textit{coherent risk measures} introduced by \cite{artzner1999coherent}. 

Since different investors may have different attitudes towards ESG, the proposed measures are parametrized using a value $\lambda \in [0,1]$ to explicitly take into account the subjective trade-off between sustainability and financial performance: when $\lambda = 0$ an investor cares exclusively about financial risk, when $\lambda = 1$ the investor cares only about ESG. In addition to ESG-coherent risk measures, we define \textit{ESG-coherent reward measures} and \textit{ESG reward--risk ratios}.

The proposed ESG-coherent measures can be directly embedded in optimal portfolio frameworks (in the objective function or as constraints), with minimal modifications compared to the univariate case. Unlike ESG constraints or screening rules, ESG enters through the risk functional itself, preserving convexity and dual representations. Portfolio choice is therefore formulated as a well-defined optimization problem based on the joint distribution of financial returns and ESG outcomes.

From an applied perspective, the analysis is motivated by market participants’ increasing focus on sustainability and by the emergence of high-frequency ESG measures, such as the \textit{Truvalue scores} issued by FactSet, as opposed to the largely annual disclosure cycle of conventional ESG data.

Section \ref{sec:measure_ESG} discusses our interpretation of ESG scores and outlines how we propose to use them. Section \ref{sec:risk_measures} introduces ESG-coherent risk measures and provides examples, while Section \ref{sec:RRR} introduces ESG reward--risk ratios. Section \ref{sec:empirical} presents an empirical application using the equity returns and real-time ESG data of the constituents of the Dow-Jones Industrial Average (DJIA) financial index. Section \ref{policy} discusses policies and practical applications. Section \ref{sec:conclusions} highlights some conclusions.

\section{Measuring ESG and financial performance}\label{sec:measure_ESG}

The first step in defining ESG risk measures is to clarify how we characterize and measure sustainability. Following an approach common in the literature, we use the ESG scores of a company as a proxy for sustainability \citep[see, e.g.,][]{pedersen2021responsible,utz2015tri,cesarone2022does}. Here, we clarify how to treat such a variable from both practical and theoretical points of view, establishing the basis for measuring ESG scores in a way that allows us to consistently use them alongside monetary values for risk measurement. 
We emphasize that, at this stage, we keep the discussion on an abstract level, without reference to specific data providers or methodologies. This allows us not to be limited by the current state-of-the-art market ESG scores, which are still far from being standardized and comparable across data providers \citep[see, e.g.,][]{billio2021inside, berg2022aggregate,lauria2026mean}.

Several approaches appear in the literature for modeling the random variable $r_T$ used to compute risk. \cite{artzner1999coherent} define coherent risk measures for a random variable that represents net worth by following the principle that ``bygones are bygones,'' meaning that future net worth is the only thing that matters. In an alternative approach, the random variable used to compute monetary risk represents the return of a financial position. This latter approach, which is often used in practical applications for the measurement of the risk of equity portfolios,  can be interpreted as the payoff of a unitary investment in a risky position, and it introduces some differences in the interpretation of the axioms \cite[see][Chapter 6]{rachev2011risk}.\footnote{Alternatively, some authors prefer to define risk measures computed on a variable that represents losses, thus assuming that lower values of the variable are preferred by an investor. Such an assumption does not significantly alter the analysis: it simply changes some signs in the definition of risk (see, e.g., \citealp{rockafellar2013fundamental}).}

Similarly, we need to establish the quantity measured by the random variable $\esg_T$ and the sign convention used. In particular, one must consider whether it is a \textit{stock variable} (measured at a specific point in time) or a \textit{flow variable} (measured over an interval of time, as some sort of \textit{sustainability return}). Based on how they are computed, we argue that ESG scores belong to the second category: indeed, they represent the current level of sustainability of the production and commercial practices of a company, which directly affects the  impact of that company on the world.\footnote{
	The scores are typically computed as functions of several indicators related to the production methods,
	the supply chain management, the industry in which the company operates,
	the transparency of its governance, the presence of specific policies on human rights violations, etc.} 
Unlike static ESG ratings, our interpretation treats sustainability as an accumulated flow of externalities (positive or negative) that are generated over time. In this sense, the ESG score of a company is related to the rate at which it accumulates non-monetary ``satisfaction'' for the investor. The total non-monetary satisfaction for an investor is proportional to the holding time of the investment.

Our approach is thus to consider a stochastic process $\esgm_t$, that describes the instantaneous \textit{sustainability flow}.\footnote{This quantity, although related to externalities, does not necessarily need to be expressed in monetary terms, as the value of this component is different for each investor. A monetary market price of sustainability may exist, but each investor may assign a subjective value to it.} 
For a given time horizon, the satisfaction of the investor depends on the amount of sustainability accumulated over time, which, for the interval of time from $0$ to $T$, is defined by
\begin{equation}
\esg_{T}:= \int_{0}^{T} \esgm_t dt.
\end{equation}

The variable $\esg_T$ aggregates sustainability performance over the holding period. We can then interpret the ESG scores assigned by market providers with periodicity $T$ (e.g. a day or a year), as (rescaled) realizations of the random variables $\esg_{T}$. 

Finally, for each asset we define a bivariate random variable $X_{T}$ used to compute ESG risk and reward measures, as: 

\begin{equation}
	X_T = \begin{bmatrix}
		r_T\\
		\esg_T
	\end{bmatrix},
\end{equation}

where $\esg_{T}$ is the sustainability dimension, and $r_{T}$ is the monetary dimension, computed as the cumulative log-return over a static-horizon of length $T$.

Formally, our framework works under the assumption that the bivariate random vector $X_T$ is measurable on the same probability space $(\Omega,\mathcal{F},{P})$. We acknowledge that in practical settings the ESG score process may depend on external information, or that sustainability related information may affect the dynamic of financial returns. We point out that the theoretical axiomatic setup does not impose structural assumptions beyond measurability and joint distribution of $X_T$.

The stochasticity of the process $\esgm_t$ can be traced to two sources: the first is the uncertainty in the measurement of its value, and the second is the uncertainty in the evolution of the sustainability policies within the company, which is driven by the choices made by the management, by market conditions, and by the regulatory framework. We also expect a correlation structure between $\esg_T$ and financial returns $r_T$ due to the presence of common driving factors related, for instance, to sector-wide or country-wide dynamics. 

In the rest of the work we use the following scaling convention for the random variable $\esg_T$: assuming that the ESG scores are bounded, we rescale the values of daily ESG scores such that $\esg_{T} \in [-1/c,1/c]$, where $T$ is equal to 1 trading day, and $c=252$. As discussed later, when applied to the Factset Truvalue ESG scores this standardization leads to  the ``typical magnitude'' of $\esg_{T}$ for a daily time period to be comparable to the ``typical magnitude'' of the financial cumulative returns measured over the same time period.

The periodic return $r_T$ may be viewed as the payoff of a unit investment in a risky position. Consequently, $\esg_T$ is taken to refer to the same unit investment, so that the variable is proportional to the initial size of the position. With respect to sign conventions, it is natural to define ESG scores so that higher values are preferable, in line with the standard preference for higher returns. These conventions are adopted for comparability and do not restrict the generality of the axiomatic results.

Concerning empirical applications, we acknowledge that the quality and standardization of ESG scores across data providers is a major issue. At the time of writing, most of the available ESG scores are largely based on balance sheet items and self-reported information, and are updated annually. This makes it difficult to obtain high frequency and forward-looking information regarding the sustainability of a company. One notable exception is the \textit{Truvalue scores} framework introduced by Factset, that are computed with daily frequency. Such scores will be used in the empirical analysis in Section \ref{sec:empirical}. With the growing use of real-time data, the relevance of properly modelling the time-series dynamics of ESG, and the integration in quantitative risk and portfolio management will grow, making it relevant to develop a solid theoretical framework. Thus, further research should focus on the modeling and estimation of $\esgm_t$, considering both the time series evolution of reported ESG scores, and possibly the dispersion of ratings from different providers.

\section{ESG-coherent risk measures}\label{sec:risk_measures}

We recall the axioms that define a coherent risk measure before introducing the bivariate framework \citep{artzner1999coherent}. These axioms make it possible to identify measures with desirable properties, assisting both investors and regulators.
Consider a convex set $\mathcal{X}\subseteq\mathcal{L}_p(\Omega,\mathcal{F},P)$ of real-valued random variables $r_T$ which are defined on a probability space $(\Omega,\mathcal{F},P)$, have finite $p$-moments ($p\geq 1$), and are indistinguishable up to a set of $P$-measure zero. We assume that the random variables represent the cumulative return over time T or the payoff at time T of an asset. The functional $\rho(r_T) : \mathcal{X} \rightarrow \mathbb{R}\cup \{+\infty\}$ is a coherent risk measure if it satisfies the following properties:
\begin{itemize}[align=left]
	\item[(SUB)] sub-additivity: if $r_{1,T},r_{2,T} \in \mathcal{X}$, then $\rho(r_{1,T}+r_{2,T} ) \leq \rho(r_{1,T}) + \rho(r_{2,T} )$;
	\item [(PH)] positive homogeneity: if $\alpha\in \Rb_+$ and $r_T \in \mathcal{X}$, then $\rho(\alpha r_T) = \alpha \rho(r_T)$;
	\item [(TI)] translation invariance: if $a$ is deterministic, $\rho(r_T+a)=\rho(r_T)-a$;
	\item[(MO)] monotonicity: if $r_{1,T},r_{2,T} \in \mathcal{X}$ and $r_{1,T}\leq r_{2,T}$ a.s., then $\rho(r_{1,T}) \geq \rho(r_{2,T})$.
\end{itemize}
Examples of coherent risk measures are AVaR and expectile; but VaR and standard deviation are not coherent risk measures.\footnote{
	Alternative axiomizations of risk measures, such as convex risk measures \citep{follmer2002convex}
	and regular risk measures \citep{rockafellar2013fundamental} have been proposed since the work of \citep{artzner1999coherent}.}

The definition of a coherent risk measure is based upon a univariate random variable, and on the idea that the risk is a function only of the returns of the asset (consistent with the assumption that an investor is only interested in the monetary outcome of their position). In some contexts, the monetary outcome does not fully characterize the risk. Examples include: portfolios in two countries having floating exchange rates whose payoffs in the different currencies are not perfectly substitutable (the Siegel paradox, see \citealp{black1989universal}); the fact that various maturities for interest rate products are not perfect substitutes (failure of the pure expectation hypothesis); and cases in which it is difficult to attribute a monetary equivalent to various dimensions of risk, such as environmental or health risks. The same principle can be used for the ESG-oriented investor, whose risk depends on both the monetary return of a financial position and its sustainability, represented in our analysis by its ESG score, that is not priced in monetary units. Such an approach allows an investor to deal with trade-offs between different goals that, although they are very different in nature, have to be taken into account in the investment process. 

\subsection{ESG-coherent risk measures --- axiomatic definition} \label{sec:axioms}

We now extend the classical axioms of coherent risk to a bivariate setting, where the two components represent monetary and sustainability dimensions of the same position.\footnote{
	As discussed in Section \ref{sec:measure_ESG}, we use periodic returns for the monetary component and ESG scores (which represent the accumulated sustainability of the investment)
	for the sustainability component.
	The definition could be extended to alternative specifications that can be represented using two random variables with a joint distribution such that an investor has a preference for both higher $r$ and higher $\esg$ values
	(e.g., using the final net worth and the accumulated sustainability multiplied by the initial value of the position).}
Consider a convex set $ \mathcal{X}_2$ of random vectors $X_T = [ r_T, \; ESG_T ]'$, defined on a probability space $(\Omega,\mathcal{F},P)$ with values in $\mathbb{R}^2$. We use the short-hand notation $\mathcal{X}_2=\mathcal{L}_p(\Omega,\mathcal{F},P;\Rb^2)$ for the space of random vectors with two components and finite $p$-th moments, which are indistinguishable on sets of $P$-measure zero. Here $p\in[1,\infty]$. Since individual investors may have different attitude towards ESG, to highlight the trade-off between the monetary and sustainability risk components, we use the parameter $\lambda \in [0,1]$, a scalar which represents an investor preference for the relative weighing between the monetary and ESG components of risk. Such parameter is investor-specific but fixed at evaluation time.

An ESG risk measure is then a functional of the form $\bfrho_{\lambda}(X): \mathcal{X}_2 \rightarrow \mathbb{R} \cup \{+ \infty\}$. As in the univariate case, it is possible to axiomatically characterize a set of measures that have desirable properties. These axioms are:

\begin{itemize}[align=left]
	\item[(SUB-M)] sub-additivity: if $X_{1,T},X_{2,T} \in \mathcal{X}_2$, then $\bfrho_\lambda (X_{1,T}+X_{2,T}) \leq \bfrho_\lambda(X_{1,T}) + \bfrho_\lambda(X_{2,T})$;
	\item [(PH-M)] positive homogeneity: if $\beta \in \Rb_+$ and $X_{T} \in \mathcal{X}_2$, then $\bfrho_\lambda(\beta X_{T}) = \beta \bfrho_\lambda(X_{T})$;
	\item[(MO-M)] monotonicity: if $X_{1,T},X_{2,T} \in \mathcal{X}_2$ and ($r_{1,T}\leq r_{2,T} \wedge \esg_{1,T}\leq \esg_{2,T}$) a.s., then $\bfrho_\lambda(X_{1,T}) \geq \bfrho_\lambda(X_{2,T})$;
	\item [(LH-M)] lambda homogeneity: if $a = [a_1, \; a_2]' \in \mathbb{R}^2$ is deterministic, then $\bfrho_\lambda(a) = -\big((1-\lambda)a_1 + \lambda a_2\big)$.
\end{itemize}

The first three axioms directly extend the univariate case, while LH-M axiom specifies the risk of deterministic ESG-monetary bundles and encodes investor preferences. We then provide the following definition:

\begin{definition}[ESG-coherent risk measure]
	Consider a probability space $(\Omega, \mathcal{F}, P)$, a parameter $\lambda \in [0,1]$, and $X_T = [ r_T, \; ESG_T ]'$ belonging to a set of bivariate random variables $\mathcal{X}_2$ where $r_T$ measures the cumulative returns of a position or portfolio over a period $T$, and $\esg_T$ measures the cumulative sustainability flow. We define an ESG-coherent risk measure as any functional $\bfrho_\lambda(X_T): \mathcal{X}_2 \rightarrow \mathbb{R} \cup \{+ \infty\}$ that satisfies the four axioms SUB-M through LH-M. 
\end{definition}

It follows from the axioms that any ESG-coherent risk measure satisfies the following property (TI-M), which generalizes translation invariance to the bivariate case.

\begin{proposition} \label{prop:TI}
	An ESG-coherent risk measure satisfies the translation invariance property TI-M:
	\begin{itemize}
		\item [(TI-M)] translation invariance: if $a = [a_1, \; a_2]' \in \mathbb{R}^2$ is deterministic, $\bfrho_{\lambda}(X_{T}+a)=\bfrho_{\lambda}(X_{T}) -\big((1-\lambda)a_1 + \lambda a_2\big)$;
	\end{itemize}	
\end{proposition}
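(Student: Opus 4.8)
The plan is to sandwich $\bfrho_\lambda(X_T + a)$ between two bounds obtained by applying sub-additivity (SUB-M) in opposite directions, and then to evaluate the measure on the deterministic vectors $a$ and $-a$ using lambda homogeneity (LH-M). Since LH-M already pins down the value of $\bfrho_\lambda$ on every constant vector, the whole content of the argument is just to transfer the additivity of constants through the (only sub-additive) functional, turning two inequalities into an equality. Only SUB-M and LH-M are needed; PH-M and MO-M play no role here.

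For the upper bound I would view $X_T + a$ as the sum of $X_T$ and the deterministic vector $a$, both of which lie in the vector space $\mathcal{X}_2$, and apply SUB-M to obtain $\bfrho_\lambda(X_T + a) \le \bfrho_\lambda(X_T) + \bfrho_\lambda(a)$. Substituting the value $\bfrho_\lambda(a) = -\big((1-\lambda)a_1 + \lambda a_2\big)$ supplied by LH-M yields $\bfrho_\lambda(X_T + a) \le \bfrho_\lambda(X_T) - \big((1-\lambda)a_1 + \lambda a_2\big)$.

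For the matching lower bound, the key step is to run sub-additivity backwards: write $X_T = (X_T + a) + (-a)$ and apply SUB-M to get $\bfrho_\lambda(X_T) \le \bfrho_\lambda(X_T + a) + \bfrho_\lambda(-a)$. Applying LH-M to the deterministic vector $-a$ gives $\bfrho_\lambda(-a) = (1-\lambda)a_1 + \lambda a_2$, and rearranging produces $\bfrho_\lambda(X_T + a) \ge \bfrho_\lambda(X_T) - \big((1-\lambda)a_1 + \lambda a_2\big)$. Combined with the upper bound, this gives the claimed identity.

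The main point is that there is essentially no deep obstacle: this is the standard ``sub-additivity is tight on translations'' argument familiar from the univariate coherent case. The only things requiring a moment's care are that $\mathcal{X}_2$ is closed under adding deterministic vectors, so that $X_T + a$ and $-a$ genuinely belong to $\mathcal{X}_2$ and both inequalities are legitimate, and that LH-M is invoked for $a$ and for $-a$ separately, so that the finite constants $\pm\big((1-\lambda)a_1 + \lambda a_2\big)$ may be added and subtracted without interacting with the possible value $+\infty$ of $\bfrho_\lambda(X_T)$.
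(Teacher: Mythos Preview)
Your proof is correct and follows essentially the same approach as the paper: both obtain the upper bound by applying SUB-M to $X_T + a$, then the lower bound by applying SUB-M to the decomposition $X_T = (X_T + a) + (-a)$, and use LH-M to evaluate the functional on the deterministic vectors $a$ and $-a$. The only cosmetic difference is that the paper first records $\bfrho_\lambda(-a) = -\bfrho_\lambda(a)$ as a consequence of LH-M before substituting, whereas you evaluate $\bfrho_\lambda(-a)$ directly.
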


\begin{proof}
	for SUB-M we know that, for $a = [a_1, \; a_2]' \in \mathbb{R}^2$ 
	\begin{equation}
		\bfrho_{\lambda}(X_{T}+a) \leq \bfrho_{\lambda}(X_{T}) + \bfrho_{\lambda}(a). \label{eq:proof_TI1}
	\end{equation}
	
	Using SUB-M and noting that due to LH-M we have $\bfrho_{\lambda}(-a) = -\bfrho_{\lambda}(a)$, we show that $$\bfrho_{\lambda}(X_{T}) = \bfrho_{\lambda}(X_{T}+a-a) \leq \bfrho_{\lambda}(X_{T}+a) - \bfrho_{\lambda}(a)$$ that implies 
	\begin{equation} 
		 \bfrho_{\lambda}(X_{T}+a) \geq \bfrho_{\lambda}(X_{T}) + \bfrho_{\lambda}(a). \label{eq:proof_TI2}
	\end{equation}
	 Considering axiom LH-M and equations (\ref{eq:proof_TI2}) and (\ref{eq:proof_TI2}), we have: $$ \bfrho_{\lambda}(X_{T}+a) = \bfrho_{\lambda}(X_{T}) -\big((1-\lambda)a_1 + \lambda a_2\big).$$
\end{proof}

We note the following.
\begin{itemize}[align=left]
\item Axioms SUB-M and PH-M are straightforward multivariate generalization of the axioms SUB and PH in the univariate definition of coherent risk measures.
\item Axiom MO-M generalizes MO. Specifically, we impose a monotonicity condition for which zeroth-order stochastic dominance implies the a.s. ordering of risks. Alternative approaches may be based on first- and second-order stochastic dominance; we leave such an analysis for future studies and maintain the most general (weakest) condition. We emphasize that the ordering induced by the rule in MO-M is partial.
\item The last axiom LM-H defines the value of the risk measure for a constant quantity and introduces the parameter $\lambda$ that accounts for the subjective preference of each investor between financial and ESG dimensions. 
\item The axiom LM-H enables the characterization of the risk of an \textit{ESG safe asset} (SA), i.e. a position having constant values for both monetary and ESG components. This specification ensures that $\bfrho_{\lambda}(\bfone)  = -1 , \; \forall \lambda \in [0,1]$, where $\bfone$ is the vector $[1,1]'$.
\item Property TI-M extends the univariate translation invariance axiom TI.
\end{itemize}

This set of axioms is related to the work of \cite{ruschendorf2006law}, \cite{wei2014coherent}, and \cite{chen2020multivariate}, which define convex risk for portfolio vectors using scalar-valued functions. A key difference from these papers, is the introduction of the parameter $\lambda$ in the axiom LM-H.

The class of ESG-coherent risk measures extends coherent measures to the multivariate setting and provides a way to control the trade-off between the two sources of risk. This trade-off depends on the preferences of the individual investor expressed by $\lambda$. Axioms SUB-M and PH-M guarantee that an ESG-coherent risk measure is convex. This allows an investor to diversify not only by creating portfolios of multiple assets, but also to diversify between monetary risk and ESG risk, as highlighted by the following remarks.
\begin{remark}
	Given $X_T = [r_T, \; ESG_T ]' \in \mathcal{X}_2$ and an ESG risk measure $\bfrho_\lambda(X_T)$, for an investor with a given $\lambda$ the \textit{pure monetary risk} and \textit{pure ESG risk} are defined by $\bfrho_\lambda([r_T, \;0]')$ and $\bfrho_\lambda([0, \;\esg_T]')$, respectively.
\end{remark}

\begin{remark}[Diversification between monetary and ESG risk]
	If $\bfrho_{\lambda}(X_T)$ is an ESG-coherent risk measure, from SUB-M we observe that
	\begin{equation}
		\bfrho_{\lambda}(X_T)\leq \bfrho_\lambda([r_T, \;0]') + \bfrho_\lambda([0, \;\esg_T]').
	\end{equation}
That is, the risk of a position is always less than or equal to the sum of the pure ESG risk and the pure monetary risk (the investor diversifies between the ESG risk and monetary risk).
\end{remark}

Further insights can be gathered by considering the acceptance set characterization of univariate risk measures: as discussed in \cite{artzner1999coherent}, in the univariate case risk can be interpreted as the minimum amount of cash that has to be added to a position to make it acceptable (i.e. to have risk smaller or equal than zero). Extending to a multivariate setting, we no longer have a single safe asset (cash), but rather several instruments with deterministic payoffs. More specifically, in the context of ESG investing, one can postulate the presence of multiple such instruments, each characterized by a different combination of return and ESG attributes. In this case, it is possible to make a risky position acceptable by adding specific amounts of the available safe assets. Indeed, an alternative set-valued multivariate extension discussed in the literature defines risk as the set of all the deterministic capital positions that added to a risky position make it acceptable. The advantages of the set-valued approach is to provide a more complete assessment of risk in relation to the multiple drivers, and to provide a general mathematical formulation, but at the cost of greater complexity and the inability to directly rank positions. We limit our analysis to the scalar-valued bivariate risk measures illustrated above, and we refer to \cite{jouini2004vector,hamel2009duality,hamel2013set,feinstein2013time} for a broader discussion on set-valued multivariate risk. The practical implication for ESG-oriented investors of the presence of multiple safe assets in the market, are further discussed in Section \ref{sec:risk_free}.

We note that the definition of an ESG-coherent risk measure remains agnostic concerning the measurement of either the financial performance or the ESG score; the former can be measured in terms of the final wealth, profit and loss, or periodic returns \citep{artzner1999coherent}, and the latter can be computed according to multiple methodologies and aggregated over time following several approaches. The only requirement is that the investor must have a preference for both higher financial gain
and higher ESG scores (hence, the monetary part must be expressed in terms such that gains are positive).

In an analogous manner, we can define ESG-coherent reward measures that extend the work of \cite{rachev2008desirable} (see Appendix \ref{sec:reward}).

\subsubsection{Dual representation}
It is well known that coherent risk measures have a dual representation $-$ the supremum of a certain expected value over a risk envelope \citep{ruszczynski2006optimization,ang2018dual,dentcheva2024risk}. For ESG-coherent risk measures, the dual representation is introduced in Proposition \ref{prop:dr}.

\begin{proposition}\label{prop:dr}
	Given $X_T = [r_T, \; ESG_T ]' \in \mathcal{X}_2$ and an ESG-coherent risk measure $\bfrho_\lambda(X_T)$ that satisfies axioms SUB-M through LH-M, the dual representation of the risk measure is 
	\begin{align}\label{eq:dual}
	\bfrho_\lambda (X_T)& =\sup_{\zeta\in \Ac_{\bfrho_\lambda}} \left\{- \int_\Omega \left[ \zeta_1(\omega) r_T (\omega)+ \zeta_2(\omega) \esg_T (\omega)\right] P(d\omega) \right\},
	\end{align}
	where $\Ac_{\bfrho_\lambda}$ contains non-negative functions $(\zeta_1,\zeta_2)\in \Lc_q(\Omega,\mathcal{F},P;\Rb^2)$ whose expected value is $[1-\lambda, \quad \lambda]'$.
	Furthermore, $\Ac_{\bfrho_\lambda}$ is equal to the convex subdifferential of $\bfrho_\lambda ([0, \;0]')$. 
\end{proposition}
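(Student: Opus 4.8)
The plan is to recognize $\bfrho_\lambda$ as a proper, sublinear functional and to invoke the support-function form of the Fenchel--Moreau theorem. Axioms SUB-M and PH-M together say that $\bfrho_\lambda$ is sub-additive and positively homogeneous, hence sublinear and in particular convex; moreover PH-M forces $\bfrho_\lambda([0,0]') = 0$, so the functional is proper with the origin in its effective domain. I would work with the dual pairing of $\Xc_2 = \Lc_p(\Omega,\Fc,P;\Rb^2)$ with $\Lc_q(\Omega,\Fc,P;\Rb^2)$, where $1/p+1/q=1$, writing the bilinear form as $\langle \zeta, X_T\rangle = \E[\zeta_1 r_T + \zeta_2\esg_T]$.

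Because $\bfrho_\lambda$ is positively homogeneous, its convex conjugate takes only the values $0$ and $+\infty$; that is, the conjugate is the indicator of the closed convex set $\partial\bfrho_\lambda(0) = \{g \in \Lc_q : \langle g, X_T\rangle \le \bfrho_\lambda(X_T)\ \forall X_T\in\Xc_2\}$. Assuming the lower semicontinuity discussed below, the biconjugate theorem gives that $\bfrho_\lambda$ coincides with its biconjugate, hence $\bfrho_\lambda(X_T) = \sup_{g\in\partial\bfrho_\lambda(0)} \E[g_1 r_T + g_2\esg_T]$. Substituting $\zeta = -g$ rewrites this in the stated form $\sup_{\zeta} \{-\E[\zeta_1 r_T + \zeta_2\esg_T]\}$ and identifies the risk envelope $\Ac_{\bfrho_\lambda}$ with $\partial\bfrho_\lambda([0,0]')$, expressed through its representing densities (equivalently $-\partial\bfrho_\lambda([0,0]')$ under the pairing above).

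It then remains to check that the two constraints defining $\Ac_{\bfrho_\lambda}$ are exactly the ones cut out by MO-M and LH-M. For non-negativity, take $g\in\partial\bfrho_\lambda([0,0]')$ and any $X_T = [r_T,\esg_T]'$ with $r_T\ge 0$ and $\esg_T\ge 0$ a.s.; MO-M (comparing with $[0,0]'$) gives $\bfrho_\lambda(X_T)\le 0$, so $\E[g_1 r_T + g_2\esg_T]\le 0$; testing with $r_T = \one_A$, $\esg_T=0$ over arbitrary events $A$ and then the symmetric choice forces $g_1\le 0$ and $g_2\le 0$ a.s., i.e. $\zeta = -g \ge 0$. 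For the expectation constraint, I would evaluate the subgradient inequality at a deterministic $a=[a_1,a_2]'$ and at $-a$; since LH-M gives $\bfrho_\lambda(a) = -\big((1-\lambda)a_1+\lambda a_2\big)$ and $\bfrho_\lambda(-a) = -\bfrho_\lambda(a)$, the two inequalities combine to the equality $a_1\E[g_1]+a_2\E[g_2] = -\big((1-\lambda)a_1+\lambda a_2\big)$ for every $a\in\Rb^2$, whence $\E[g_1]=-(1-\lambda)$ and $\E[g_2]=-\lambda$, i.e. $\E[\zeta] = [1-\lambda,\ \lambda]'$. This recovers precisely the description of $\Ac_{\bfrho_\lambda}$.

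The main obstacle is the closedness/lower-semicontinuity needed to apply the biconjugate theorem and, crucially, to guarantee that the representing functionals are genuine elements of $\Lc_q$ rather than merely finitely additive. For $p\in[1,\infty)$, norm lower semicontinuity of the convex functional $\bfrho_\lambda$ suffices, and the dual variables lie in $\Lc_q$ automatically. The delicate case is $p=\infty$, where the topological dual of $\Lc_\infty$ is the space of finitely additive measures; here one must impose a Fatou-type property (equivalently, $\sigma(\Lc_\infty,\Lc_1)$-lower semicontinuity) so that $\partial\bfrho_\lambda([0,0]')$ consists of countably additive densities in $\Lc_1$, in the spirit of the Delbaen representation of coherent risk measures. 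Under that standard regularity assumption the argument above goes through verbatim.
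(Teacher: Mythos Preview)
Your proposal is correct and follows essentially the same route as the paper's proof in Appendix~\ref{sec:proof_dual}: both invoke the Fenchel--Moreau/biconjugate theorem, use positive homogeneity to reduce the conjugate to the indicator of $\partial\bfrho_\lambda(0)$, derive non-negativity of the dual variables from MO-M, and obtain the expectation constraint $[1-\lambda,\lambda]'$ from LH-M, with an explicit discussion of the lower-semicontinuity hypothesis and the $p=\infty$ case. The only cosmetic difference is that the paper works throughout with $Z=-X_T$ (so that the subdifferential consists directly of non-negative densities), whereas you work with $X_T$ and absorb the sign via $\zeta=-g$; your parenthetical remark about $-\partial\bfrho_\lambda([0,0]')$ correctly flags this, and it matches the paper's closing comment on sign adjustment.
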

The proof of Proposition \ref{prop:dr} is provided in Appendix \ref{sec:proof_dual}.
Using a more compact notation, (\ref{eq:dual}) can be written
$$\bfrho_\lambda (X_T)=\sup_{\zeta\in \Ac_{\bfrho_\lambda}} \{-\Eb[\zeta_1 r_T  + \zeta_2 \esg_T]\}.$$

Proposition \ref{prop:lambda0-1} addresses the marginal ESG-coherent risk measure when $\lambda=1$ or $\lambda=0$.

\begin{proposition} \label{prop:lambda0-1}
	If $\bfrho_\lambda$ is an ESG-coherent risk measure, then
	$$\bfrho_0([r_T, \; \esg_T]')=\bfrho_0([r_T, \; 0]'),$$
	$$\bfrho_1([r_T, \; \esg_T]')=\bfrho_1([0, \; \esg_T]').$$
\end{proposition}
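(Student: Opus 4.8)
The plan is to read the result directly off the dual representation established in Proposition \ref{prop:dr}, since the boundary cases $\lambda=0$ and $\lambda=1$ degenerate the admissible set of dual multipliers. I will treat $\lambda=0$ in detail; the case $\lambda=1$ is entirely symmetric under exchange of the two coordinates.

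First I would invoke Proposition \ref{prop:dr} to write
$$\bfrho_0(X_T)=\sup_{\zeta\in\Ac_{\bfrho_0}}\{-\Eb[\zeta_1 r_T+\zeta_2\esg_T]\},$$
and recall that every $(\zeta_1,\zeta_2)\in\Ac_{\bfrho_0}$ is non-negative with $\Eb[\zeta_1]=1-\lambda=1$ and $\Eb[\zeta_2]=\lambda=0$. The key observation is the elementary measure-theoretic fact that a non-negative random variable with zero expectation vanishes almost surely; hence $\zeta_2=0$ a.s. for every admissible $\zeta$. Consequently $\Eb[\zeta_2\esg_T]=0$ identically, the objective reduces to $-\Eb[\zeta_1 r_T]$, and the supremum no longer sees the second coordinate of $X_T$. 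Since $\Ac_{\bfrho_0}$ is the fixed convex subdifferential of $\bfrho_0([0,0]')$ and does not depend on the point $X_T$ at which we evaluate, exactly the same supremum is obtained for $[r_T,\esg_T]'$ and for $[r_T,0]'$, giving $\bfrho_0([r_T,\esg_T]')=\bfrho_0([r_T,0]')$. The argument for $\lambda=1$ is identical with the roles of $\zeta_1,\zeta_2$ and of $r_T,\esg_T$ interchanged: now $\Eb[\zeta_1]=0$ forces $\zeta_1=0$ a.s., and the first coordinate drops out.

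As a sanity check I would mention an equivalent axiom-only repackaging via sub-additivity. The degeneracy above is precisely the statement that the pure ESG risk vanishes at $\lambda=0$, i.e. $\bfrho_0([0,\esg_T]')=0$ for every admissible $\esg_T$ (and, at $\lambda=1$, that the pure monetary risk vanishes). Granting this, SUB-M applied to the decompositions $[r_T,\esg_T]'=[r_T,0]'+[0,\esg_T]'$ and $[r_T,0]'=[r_T,\esg_T]'+[0,-\esg_T]'$ yields both $\bfrho_0([r_T,\esg_T]')\le\bfrho_0([r_T,0]')$ and $\bfrho_0([r_T,0]')\le\bfrho_0([r_T,\esg_T]')$, hence equality. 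Here one only needs that the projections $[r_T,0]'$ and $[0,\esg_T]'$ again lie in $\mathcal{X}_2$, which holds because each component of $X_T$ inherits finite $p$-th moment from $X_T\in\Lc_p(\Omega,\Fc,P;\Rb^2)$.

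The step I expect to carry the (modest) weight is the vanishing of the degenerate multiplier: everything hinges on turning the moment constraint $\Eb[\zeta_2]=\lambda$ at $\lambda=0$, together with non-negativity, into $\zeta_2=0$ a.s. Once that is in place the conclusion is immediate, and the only point requiring a line of care is the observation that $\Ac_{\bfrho_\lambda}$ is independent of $X_T$, so that the same dual set governs both sides of the claimed identity.
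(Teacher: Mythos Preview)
Your main argument is correct and matches the paper's own proof essentially line for line: invoke the dual representation of Proposition~\ref{prop:dr}, use non-negativity together with $\Eb[\zeta_2]=0$ to force $\zeta_2=0$ a.s., and conclude that the second coordinate drops out of the supremum (and symmetrically for $\lambda=1$). Your additional remarks---that $\Ac_{\bfrho_\lambda}$ is independent of $X_T$ and the alternative SUB-M sandwich---are sound elaborations beyond what the paper records, but the core route is the same.
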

\begin{proof}
	For $\lambda=0$, we know from the dual representation that $\zeta_2=0$ a.s., since it has zero expected value and is non-negative. Hence, 
	$$\bfrho_0 ([r_T, \; \esg_T]')  =\sup_{\zeta\in \Ac_0} \left\{-\int_\Omega \zeta_1(\omega) r_T (\omega) P(d\omega)\right\}=\bfrho_0([r_T, \; 0]').$$ 
	Analogously, for $\lambda=1$ we have 
	$$\bfrho_1 ([r_T, \; \esg_T]') =\sup_{\zeta\in \Ac_1} \left\{-\int_\Omega \zeta_2(\omega) \esg_T (\omega) P(d\omega)\right\}=\bfrho_1 ([0, \; \esg_T]').$$
\end{proof}

In other words, Proposition \ref{prop:lambda0-1} states that the risk for an investor with $\lambda=0$ is not affected by the ESG score of the asset, while the risk for an investor with $\lambda=1$ is not affected by the monetary returns.

\subsection{Hedging risk by investing in ESG safe assets}\label{sec:risk_free}

To clarify both the structure and the practical use of ESG-coherent risk measures, we analyze how ESG-oriented investors may hedge a risky position by allocating capital to an ESG safe asset, better clarifying the interpretation of the translation invariance property in the bivariate setting. We will consider the case of an investor or an asset manager that is subject to the budget constraint, and thus to invest in an ESG safe asset needs to disinvest an equal amount from the risky asset, following standard treatments in the risk-measure literature \cite[see, e.g. ][Chapter 6.4.4]{rachev2011risk}. In a traditional univariate framework, a safe asset by definition has a deterministic payoff;\footnote{
	Note that a position can have a risk equal to zero and not be a safe asset. Similarly, an asset with a deterministic payoff (i.e., a safe asset) can have a risk that is different from zero. We can understand this point better by considering a position with a return $r_T$ and risk $\rho(r_T)=m$. If $\rho(\cdot)$ is a coherent risk measure, by axiom TI we have $\rho(r_T+m)=0$. That is, a position with a return $r_T'=r_T+m$ has zero risk (but its returns are not necessarily constant).}
its return is a constant $\rft\in \mathbb{R}$. If we define the risk on a univariate random variable that represents  returns, axioms TI and PO state that the risk of a portfolio composed of the safe asset and a risky position $r_T$ is
\begin{equation}\label{eq:TI_returns}
	\rho((1-w)r_T + w \rft) = (1-w)\rho(r_T) - w \rft,
\end{equation}
where $w\in [0,1]$ is the weight of the safe asset in the portfolio. More formally, we address the problem of an investor who is willing to reduce the risk of a position to an acceptable level $\kappa$ by creating a portfolio consisting of the risky  position, and of the smallest possible amount of the safe asset. The motivation is, for instance, to satisfy requirements imposed by regulators or by the institutional mandate. Formally the problem is 

\begin{align}
w^* = \arg\min_{w} & (w)\notag\\
\text{s.t. } & \rho((1-w)r_T + w \rft)\leq \kappa, \notag\\
& 0 \leq w \leq 1.\label{eq:opt_rf_univariate}
\end{align}

We assume that $-\rft< \kappa\leq \rho (r_T)$. Since the risk of a portfolio is an affine function of $w$ as shown in \eqref{eq:TI_returns}, the solution of \eqref{eq:opt_rf_univariate} is: 
\begin{equation}\label{eq:rfw}
	w^* = \dfrac{\rho(r_T)-\kappa}{\rho(r_T)+\rft}.
\end{equation}
That is, the risky position can be hedged by constructing a portfolio that contains the safe asset having weight $w^*$.\footnote{
	If the risk measure were defined in terms of final net worth rather than returns,
	to hedge a risky position with risk $m$, it would be necessary to add a cash position.
	For a broader discussion of the interpretation of the axioms expressed in terms of returns rather than the final net worth,
	see \citet[][Chapter 6]{rachev2011risk}.}

In the ESG-oriented investing framework, an ESG safe asset is a position having constant values for both monetary and ESG components. For illustrative purposes, we can postulate the existence of several types of such ESG safe assets, distinguished by different combinations of constant ESG and monetary return.

Consider first the case in which only one type of ESG safe asset is available in the market. The variable $\sa_T \in \mathbb{R}^2$ denotes the constant return and constant $\esg$ for the period $T$ of the safe asset:
	\begin{equation}
		\sa_T:=\begin{bmatrix}
			\rft\\ \rftesg
		\end{bmatrix}. \notag
	\end{equation}
	
We know that by axiom LH-M, for an ESG-coherent risk measure, the risk of this ESG safe asset is $\bfrho_{\lambda}(\sa_T) = -((1-\lambda)\rft + \lambda \rftesg)$. The problem of hedging the risk of an asset with bivariate return $X_T$ is analogous to the univariate case: an investor with a given $\lambda$ wants to construct a portfolio with ESG-risk smaller or equal than $\kappa$ by creating a portfolio with the risky asset, and the smallest possible amount of the ESG safe asset (i.e. minimizing its weight in the portfolio):

\begin{align}
w_\lambda^* = \arg\min_{w} & (w)\notag\\
\text{s.t. } & \bfrho_\lambda((1-w)X_T + w \sa_T)\leq \kappa, \notag\\
& 0 \leq w \leq 1.\label{eq:opt_rf_ESG}
\end{align}

We assume that $-(1-\lambda)\rft - \lambda \rftesg< \kappa\leq \bfrho_\lambda(X_T).$ The solution is 
\begin{equation}
	w_\lambda^* = \dfrac{\bfrho_\lambda(X_T)-\kappa}{(1-\lambda)\rft+ \lambda \rftesg+\bfrho_\lambda(X_T)}.
\end{equation}
We underline that $w_\lambda^*$ is unique for each investor and its value varies with the parameter $\lambda$. In practice, such an ESG safe asset could be achieved by the investor making a guaranteed loan to an institution (either a for-profit company, a government, or a non-profit institution) that has a positive and stable environmental or social impact, which generates an interest $\rft$ for the investor. 

We discuss three special cases characterized by specific ESG safe assets. With the exception of case 3, we assume that the ESG safe assets have non-negative return and ESG.

\begin{enumerate}[align=left, leftmargin = *]
\item \textbf{The only ESG safe asset is a \textit{pure monetary safe asset}}, that has the following bivariate return:
	\begin{equation}
		\sa_{\cash,T}:=\begin{bmatrix}
		\rft\\0
		\end{bmatrix}, \notag
	\end{equation}

	where $\rft\geq0$. 
	An example of this is a risk-free, zero-coupon bond issued by a governmental institution not associated with
	a specific ESG profile.\footnote{
	Rating agencies are starting to compute ESG scores for countries as well, although the criteria are different from those used to calculate companies' scores. The identification of pure monetary and pure ESG safe assets will be a significant challenge for practitioners and scholars.}
	An ESG investor can hedge the risk of a position $X_T$ by constructing a portfolio composed of the \textit{pure monetary safe asset} in proportion $w_\lambda^*$ and investing the rest in the risky asset, 
	with  $$w^*_\lambda = \dfrac{\bfrho_\lambda(X_T) - \kappa}{(1-\lambda)\rft  + \bfrho_\lambda(X_T)}.$$
	We note that if $\lambda=1$ (i.e., an investor cares exclusively about the ESG component) $\bfrho_1(\sa_{\cash,T})=0$.

\item \textbf{The only ESG safe asset is a \textit{pure ESG safe asset}} with bivariate return
	\begin{equation}
	\sa_{\esg,T}:=\begin{bmatrix}
	0\\ \rftesg
	\end{bmatrix}. \notag
	\end{equation}
	The analysis of this case is symmetrical to that for the pure monetary safe asset.

\item \textbf{There exists an ESG safe asset having a monetary return of --100\%.}
	We consider the special case of an ESG safe asset represented by
	\begin{equation}
		\sa_{GRANT,T}:=\begin{bmatrix}
			-1\\  \rftesg
			\end{bmatrix}. \notag
	\end{equation}
    The financial component corresponds to a deterministic loss of the entire investment, while the ESG component is positive and constant. As a conceptual benchmark, this position may be interpreted as a grant to a charitable organization generating a fixed ESG contribution. Such an asset is clearly not a relevant investment opportunity for a risk-return investor with $\lambda=0$.\footnote{Assuming the absence of tax benefits associated to the donation, or other forms of monetary gains.} On the other hand, for an ESG-oriented investor with $\lambda>0$ it could be rational to invest in such asset.
	The measured risk of such an asset is $\bfrho_{\lambda}(\sa_{GRANT,T}) = (1-\lambda) - \lambda \rftesg$;
	that is, for an investor with a given $\lambda$ the risk is negative if $\lambda >\tilde{\lambda} =  1/(1+\rftesg)$.
	For a ``$\lambda<\tilde{\lambda}$'' investor, such an asset provides no opportunity to hedge a risky position; however for a ``$\lambda>\tilde{\lambda}$'' investor, such an ESG safe asset provides a meaningful
	hedging tool through the donation of a wealth fraction:
	$$
	 w^*_\lambda = \dfrac{\bfrho_\lambda(X_T) - \kappa}{\lambda-1+ \lambda \rftesg+\bfrho_\lambda(X_T)}
	$$
	 to a project with a positive environmental or social impact.
\end{enumerate}

We can also study the case with multiple ESG safe assets available in the market. In such case, while it appears that an investor could choose multiple ESG safe assets, a simple analysis shows that each investor will choose only a single ESG safe asset among the available ones.
If  $n$ ESG safe assets with returns $\sa_{i,T}, \;i=1,\dots,n$, are available, an investor can hedge a risky position with bivariate return $X_T$ by
creating a portfolio with weight $w_{0}$ for the risky asset and $w_i, \;i=1,\dots,n$ for the safe assets, with $\sum_{i=0}^{n}w_i=1$ such that the risk is equal to $\kappa$. We can formulate an optimization problem analogous to \ref{eq:opt_rf_ESG}, with the difference that the objective function to maximize is the weight of the risky asset $w_{0}$ (that is equivalent to minimizing the sum of the portfolio weights invested in ESG safe assets). Formally:
\begin{align}
	\max_{w_0,w_1,w_2,\dots,w_{n}} & (w_{0})\notag\\
	\text{s.t. } & \bfrho_{\lambda}(w_{0}X_T + w_1 \sa_{1,T}+ w_2 \sa_{2,T}+ \dots + w_n \sa_{n,T}){\leq \kappa,}\notag\\
	& \sum_{i=1}^{n+1}w_i=1, \quad 0 \leq w_i \leq 1, \quad i = 1,\dots, n+1.
\end{align}

The constraint set defines a convex feasible subset of $\Rb^n$ for the investment allocations, $w_i$. The solution is not a diversified portfolio;  only one ESG safe asset is selected.
This follows from the fact that, due to LH-M and TI-M, the risk of a sum of ESG safe assets is the sum of their risk: $\bfrho_{\lambda}(\delta\sa_{i,T} + (1-\delta)\sa_{j,T})=\delta\bfrho_{\lambda}(\sa_{i,T}) + (1-\delta)\bfrho_{\lambda}(\sa_{j,T}),\; \delta\in[0,1]$. For any pair of ESG safe assets $i$ and $j$, the risk of their convex combination is always greater or equal than $\min(\bfrho_{\lambda}(\sa_i),\bfrho_{\lambda}(\sa_j)$. Hence, risk minimization is achieved by selecting the ESG safe asset with minimal risk, and not by forming portfolios of safe assets. This uniqueness property, that is a direct consequence of ESG-coherence of $\bfrho_\lambda$, parallels the univariate setting, where, in the presence of multiple safe assets, the optimal choice is the cheapest one.\footnote{An exception is when two or more ESG safe assets with exactly the same risk are available.
	In such a case, they are indistinguishable to the investor in terms of risk, and either can be chosen.}
Once the ESG safe asset with the lowest risk is identified, the problem then is the same as (\ref{eq:opt_rf_ESG}) where only one ESG safe asset was available. Note however that the choice of ESG safe asset is different for each investor, as the optimization is dependent on the investor's value for $\lambda$.

\begin{remark}
	In general, for risk minimization under a budget constraint, the choice of the ESG safe asset is not influenced by the characteristics of the risk of the position, it is only influenced by the availability and price of ESG safe assets and the $\lambda$ preference of the investor.
\end{remark}

\subsection{Examples of ESG-coherent risk measures} \label{sec:examples}

After discussing ESG-coherent risk measures in general, we present two example approaches for extending univariate risk measures to a bivariate setting. In particular, starting from a univariate coherent risk measure $\rho(r_T)$, we identify the ESG-coherent risk measures $\esgd\bfrho_{\lambda}(X_T)$, and $\esgd\bfrho_{\lambda}^l(X_T)$. The measures encompass $\lambda \in [0,1]$ as a parameter, thus they are families of bivariate risk measures suitable for investors having differing ESG ``inclinations''.
In Section \ref{sec:avar} we apply these approaches to the well-known coherent risk measure, average value at risk (AVaR), resulting in two versions of coherent ESG-AVaR risk measures. These examples extend naturally to the ESG formulation without changing their convex-analytic structure.
In Section \ref{sec:varvol} we apply these approaches to two non-coherent risk measures, variance and volatility, producing ESG extensions that are not ESG-coherent.

Our first approach to generalizing a univariate risk measure $\rho(r_T)$ utilizes a linear combination of $r_T$ and $\esg_T$. For $X_T = [ r_T, \; \esg_T ]'$
\begin{align}
\esgd\bfrho_{\lambda}\left( X_T \right) & := \rho\big((1-\lambda)r_T + \lambda\esg_T \big). \label{eq:esgrho}
\end{align}

\begin{proposition} \label{prop:esgrho}
If $\rho(\cdot)$ is a univariate coherent risk measure, then $\esgd\bfrho_{\lambda}(\cdot)$ is an ESG-coherent risk measure.
\end{proposition}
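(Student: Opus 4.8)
The plan is to verify each of the four defining axioms SUB-M, PH-M, MO-M, and LH-M by reducing them to the corresponding univariate axioms satisfied by $\rho$. The key observation is that the scalarization map $X_T = [r_T,\,\esg_T]' \mapsto (1-\lambda)r_T + \lambda\esg_T$ is linear and, because $\lambda\in[0,1]$, has nonnegative coefficients. Writing $Y := (1-\lambda)r_T + \lambda\esg_T$, which lies in $\mathcal{L}_p(\Omega,\mathcal{F},P)$ as a linear combination of two components with finite $p$-th moments, the proposed measure is simply $\esgd\bfrho_\lambda(X_T) = \rho(Y)$, so the whole argument amounts to pushing each bivariate axiom through this linear map.

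First I would dispatch SUB-M and PH-M, which follow immediately from linearity of the scalarization together with SUB and PH for $\rho$. If $Y_1,Y_2$ denote the scalarizations of $X_{1,T},X_{2,T}$, then $(1-\lambda)(r_{1,T}+r_{2,T})+\lambda(\esg_{1,T}+\esg_{2,T}) = Y_1+Y_2$, so subadditivity of $\rho$ gives $\rho(Y_1+Y_2)\le\rho(Y_1)+\rho(Y_2)$, which is SUB-M. Likewise any $\beta\ge 0$ passes through the scalarization, so $\esgd\bfrho_\lambda(\beta X_T)=\rho(\beta Y)=\beta\rho(Y)=\beta\,\esgd\bfrho_\lambda(X_T)$ by PH.

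Next I would treat MO-M, where the nonnegativity of the coefficients $1-\lambda$ and $\lambda$ is essential: if $r_{1,T}\le r_{2,T}$ and $\esg_{1,T}\le\esg_{2,T}$ a.s., then $Y_1\le Y_2$ a.s., and monotonicity MO of $\rho$ gives $\rho(Y_1)\ge\rho(Y_2)$, which is exactly MO-M. This is the one place where the restriction $\lambda\in[0,1]$ (rather than an arbitrary real weight) is actually used.

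The only step requiring slightly more care, and the one I expect to be the main obstacle, is LH-M. For deterministic $a=[a_1,a_2]'$ the scalarization $(1-\lambda)a_1+\lambda a_2$ is a deterministic scalar $c$, so I must evaluate $\rho(c)$ for a constant. This is where both PH and TI of the univariate measure enter: PH with $\alpha=0$ forces $\rho(0)=0$, and then TI gives $\rho(c)=\rho(0+c)=\rho(0)-c=-c$. Substituting $c=(1-\lambda)a_1+\lambda a_2$ yields $\esgd\bfrho_\lambda(a)=-\big((1-\lambda)a_1+\lambda a_2\big)$, precisely LH-M. I flag this axiom as the nontrivial point because it is the only one that couples the univariate translation-invariance and positive-homogeneity properties, and it is also where the new parameter $\lambda$ appears explicitly in the target identity; the remaining three reductions are otherwise routine consequences of linearity.
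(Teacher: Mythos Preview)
Your proof is correct and follows essentially the same approach as the paper: both verify the four axioms by pushing them through the linear scalarization $X_T\mapsto(1-\lambda)r_T+\lambda\esg_T$ and invoking the corresponding univariate properties of $\rho$. Your treatment of LH-M is in fact more explicit than the paper's, which simply remarks that translation invariance of $\rho$ yields LH-M without spelling out the use of $\rho(0)=0$.
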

\begin{proof}
Since the right-hand-side of \eqref{eq:esgrho} involves a direct application of $\rho(\cdot)$,
the extended function $\esgd\bfrho_{\lambda}(\cdot)$ inherits the properties SUB-M and PH-M.
It is straightforward to show that axiom MO of the univariate function implies MO-M.
Consider two vectors $X_{1,T}$, $X_{2,T} \in \mathcal{X}_2$.
If $(r_{1,T} \leq r_{2,T} \wedge \esg_{1,T}\leq \esg_{2,T})$ a.s., then from the monotonicity of $\rho(\cdot)$,
we have

$$
	\esgd\bfrho_\lambda(X_{1,T}) =    \rho((1-\lambda)  r_{1,T} + \lambda \esg_{1,T})
						\geq \rho((1-\lambda)  r_{2,T} + \lambda \esg_{2,T})
						= \esgd\bfrho_\lambda(X_{2,T}).
$$

Finally, using the translation invariance of $\rho(\cdot)$, we can prove LH-M. 
\end{proof}

We note the following properties of $\esgd\bfrho_\lambda(\cdot)$.
\begin{itemize}[align=left, leftmargin = *]
	\item If $\rho(r_T)$ is convex, then $\esgd\bfrho_{\lambda}([r_T,\esg_T]')$ is a convex function of $\lambda$.
	\item For $\lambda = 0$ and $\lambda = 1$, $\esgd\bfrho_\lambda([r_T,\esg_T]')$ is equal to the univariate $\rho(r_T)$ computed on returns alone or $\rho(\esg_T)$ computed on ESG scores alone, respectively.
\end{itemize}

Our second approach utilizes a linear combination of univariate risk measures,
\begin{equation}
	\esgd\bfrho^l _\lambda\left( \left[ \begin{matrix}r_T\\ \esg_T\end{matrix}\right] \right) := (1-\lambda)\rho(r_T)+\lambda\rho(\esg_T). \label{eq:esgrhol}
\end{equation}

It is straightforward to show that $\esgd\bfrho^l _\lambda(\cdot)$ is an ESG-coherent risk measure if $\rho(\cdot)$ is coherent because axioms SUB-M, PH-M, and MO-M follow from the respective univariate axioms, while LH-M follows from TI.

We  note the following properties of $\esgd\bfrho^l_\lambda(\cdot)$.
\begin{itemize}[align = left, leftmargin = *]
	\item The measure $\esgd\bfrho^l _\lambda([r_T,\esg_T]')$ is more conservative than $\esgd\bfrho _\lambda([r_T,\esg_T]')$ as it is linear in $\lambda$ and, hence, always greater than or equal to $\esgd\bfrho _\lambda([r_T,\esg_T]')$.
	\item  $\esgd\bfrho^l _\lambda([r_T,\esg_T]')$ is equivalent to $\esgd\bfrho _\lambda([r_T,\esg_T]')$ for the case of perfect co-monotonicity between $r_T$ and $\esg_T$ (i.e. when no diversification between the two is possible). In this sense, we can consider it a worst-case measure of $\esgd\bfrho _\lambda([r_T,\esg_T]')$.
	\item For the limiting cases $\lambda=0$ and $\lambda=1$, we have $\esgd\bfrho^l _\lambda([r_T,\esg_T]') = \esgd\bfrho _\lambda([r_T,\esg_T]')$.
	\item For $\lambda=0$ or $\lambda=1$,  $\esgd\bfrho^l _\lambda([r_T,\esg_T]')$ corresponds to the univariate risk measure $\rho(X_T)$ computed on just the monetary part or just the ESG part, respectively.
\end{itemize}

\subsubsection{ESG-AVaR}\label{sec:avar}

We demonstrate the two approaches presented in equations \eqref{eq:esgrho} and \eqref{eq:esgrhol} to develop ESG-coherent risk measures based on $\avar$.
Given a random bivariate vector $X_T=[r_T, \;\esg_T]' \in \Xc_2=\mathcal{L}_1(\Omega,\mathcal{F},P;\Rb^2)$,
the first measure, given by \eqref{eq:esgrho}, is\footnote{
	See \cite{ogryczak2002dual,rockafellar2002conditional} for the extremal representation.}
\begin{align}
\esgavar\left( \left[ \begin{matrix}r_T\\ \esg_T\end{matrix}\right] \right) & := \avar_\tau\left((1-\lambda)r_T+\lambda\esg_T \right) \notag \\
	& = \inf_{\beta \in \Rb} \left\{ \frac{1}{1-\tau} \Eb\left[\left(\beta - ((1-\lambda)r_T+\lambda\esg_T) \right)^+\right] - \beta\right \}, \label{eq:esgavar}
\end{align}
where $(a)^+$ denotes max$(a,0)$.
Following the discussion on \eqref{eq:esgrho}, we conclude that $\esgavar$ is coherent. It is similar to the multivariate expected shortfall presented by \cite{ekeland2012comonotonic} (although their measure lacks the parametrization using $\lambda$ allowing investor-specific ESG preferences). 
For $\lambda=0$ the measure reduces to the univariate AVaR. Since $\esgavar(\cdot)$ is computed on univariate data (i.e., as a linear combination of $r$ and $\esg$), numerical applications using $\esgavar(\cdot)$ do not present any particular challenge; it is possible to fully utilize existing procedures developed for $\avar$ for risk estimation, portfolio optimization, and risk management. (See, e.g. \citealp{shapiro2021lectures}).

The dual representation of $\esgavar(X_T)$ is
\begin{equation}
	\esgavar\left( \left[ \begin{matrix}r_T\\ \esg_T\end{matrix}\right] \right)
		= \sup_{[\zeta_1, \;\zeta_2]' \in \Ac_{\esgavar(X_T)}} \left( - \Eb[r_T\zeta_1 + \esg_T \zeta_2] \right),
\end{equation}
where 
\begin{equation}
\Ac_{\esgavar} = \left\{[\zeta_1, \;\zeta_2]' \in \mathcal{L}_\infty(\Omega,\mathcal{F},P;\Rb^2):\; [\zeta_1, \; \zeta_2]' =\xi[1-\lambda  ,\; \lambda]'; \; 0\leq \xi\leq \frac{1}{1 -\tau}\; \text{a.s. }; \Eb[\xi]= 1\right\}.
\end{equation}
The derivation of this dual representation is given in Appendix \ref{sec:proof_avar_dual}.

The second measure, given by \eqref{eq:esgrhol}, is
\begin{align}
\esgavar^l\left( \left[ \begin{matrix}r_T\\ \esg_T\end{matrix}\right] \right) & := (1-\lambda)\avar_\tau(r_T)+\lambda\avar_\tau(\esg_T). \label{eq:esgavarl}
\end{align}
$\esgavar^l(\cdot)$ is also ESG-coherent. It can be viewed as the limit of $\esgavar(\cdot)$ in the case of  an asset for which it is not possible to diversify between the monetary and ESG components as they are comonotone. From an economic perspective, $\esgavar^l(\cdot)$ is significant for investors who consider the worst-case scenario in terms of the dependency structure.

The dual representation of $\esgavar^l(X_T)$ is
\begin{equation}
	\esgavar^l\left( \left[ \begin{matrix}r_T\\ \esg_T\end{matrix}\right] \right) = \sup_{[\zeta_1, \; \zeta_2]' \in \Ac_{\esgavar^l(X_T)}} - \Eb[r_T\zeta_1 + \esg_T \zeta_2],
\end{equation}
where
\begin{equation}
\Ac_{\esgavar^l} = \left\{[\zeta_1, \; \zeta_2]'\in \mathcal{L}_\infty(\Omega,\mathcal{F},P;\Rb^2): \Eb[\zeta_1]=1-\lambda; \Eb[\zeta_2]=\lambda;\zeta_1,\zeta_2\geq 0;\zeta_1\leq\frac{1-\lambda}{1-\tau}; \zeta_2\leq\frac{\lambda}{1-\tau}\right\}.
\end{equation}
The derivation of the dual representation of $\esgavar^l(\cdot)$ is also given in Appendix \ref{sec:proof_avar_dual}.

\subsubsection{Non-ESG-coherent measure examples} \label{sec:varvol}

In order to better delineate the boundaries of the axiomatic framework, we discuss here examples of measures that are not ESG-coherent. It is well known that the standard deviation $\sigma$ (the volatility) and the variance $\mathbb{V}$ are not coherent risk measures.
We consider the application of  \eqref{eq:esgrho} and \eqref{eq:esgrhol} to $\sigma$ and $\mathbb{V}$ and show that,
in all cases, the result is an ESG measure that does not satisfy the ESG-coherency axioms.

Given a vector $X_T=[r_T, \; \esg_T]' \in \Xc_2=\mathcal{L}_2(\Omega,\mathcal{F},P;\Rb^2)$, from \eqref{eq:esgrho}
 the ESG variance and ESG volatility are
\begin{align}
	\esgd\mathbb{V}_\lambda(X_T) & := \mathbb{V}[(1-\lambda)r_T + \lambda \esg_T], \\
	\esgd\sigma_\lambda(X_T) & := \sqrt{\esgd\mathbb{V}_\lambda(X_T)}.\label{eq:esgsd}
\end{align}
$\esgd\mathbb{V}_\lambda(\cdot)$ is not ESG-coherent, as it does not satisfy PH-M, MO-M, SUB-M, and LH-M.
$\esgd\sigma_\lambda(\cdot)$ is not ESG-coherent, as it does not satisfy MO-M and LH-M.

Using \eqref{eq:esgrhol}, the corresponding risk measures are
\begin{align}
\esgd\mathbb{V}^l_\lambda(X_T) &:= (1-\lambda)\mathbb{V}[r_T] + \lambda \mathbb{V}[\esg_T], \\
\esgd\sigma^l_\lambda(X_T) & := \sqrt{\esgd\mathbb{V}^l_\lambda(X_T)}.
\end{align}
The former does not satisfy PH-M, MO-M, SUB-M, and LH-M, and the latter does not satisfy MO-M and LH-M.

A summary of the properties of the examples considered in sections \ref{sec:avar} and \ref{sec:varvol} is given in Table \ref{tab:ESG-risk}.
\begin{table}[!h]
\begin{center}
	\caption{Summary of ESG risk measures and their properties. A checkmark indicates that the corresponding property is satisfied.}
	\label{tab:ESG-risk}
	\begin{tabular}{lcccc}\hline 
		\textbf{Risk measure}			& \textbf{SUB-M} &	\textbf{PH-M} &\textbf{MO-M} &\textbf{LH-M} \\\hline
		$\esgavar$					& $\cm$ & $\cm$ & $\cm$ & $\cm$ \\
		$\esgavar^l$					& $\cm$ & $\cm$ & $\cm$ & $\cm$ \\
		$\esgd\mathbb{V}_\lambda$		&	      &		   &		&	     \\
		$\esgd\sigma_\lambda$			& $\cm$ & $\cm$ &		&	     \\
		$\esgd\mathbb{V}^l_\lambda$	&	      &		  &		&	     \\
		$\esgd\sigma^l_\lambda$		& $\cm$ & $\cm$ &		&	     \\\hline
	\end{tabular} 
    \end{center}
\end{table}

\section{ESG reward--risk ratios}\label{sec:RRR}

It is natural to extend the ESG framework to reward--risk ratios (RRRs), used to measure risk-adjusted performance of an investment. Following \cite{cheridito2013reward}, a reward--risk ratio $\alpha(r)$ in a univariate setting is
\begin{equation}
	\alpha(r_T) := \dfrac{\theta(r_T)^+}{\rho(r_T)^+},
\end{equation}\label{eq:RRR}
where $\theta(r_T): \mathcal{X} \rightarrow \mathbb{R} \cup \{\pm \infty\}$  and $\rho(r_T): \mathcal{X} \rightarrow \mathbb{R} \cup \{+ \infty\}$ are reward and risk measures, respectively. 
\cite{cheridito2013reward} identified four conditions desirable for RRRs:

\begin{itemize}[align = left]
	\item[(MO-R)] monotonicity: if $r_{1,T},r_{2,T} \in \mathcal{X}$ and $r_{1,T}\leq r_{2,T}$ a.s., then $\alpha(r_{1,T}) \geq \alpha(r_{2,T})$;
	\item[(QC-R)] quasi-concavity: if $r_{1,T},r_{2,T} \in \mathcal{X}$ and $\delta \in [0,1]$, then $\alpha(\delta r_{1,T} + (1-\delta) r_{2,T}) \geq \min(\alpha(r_{1,T}),\alpha(r_{2,T}))$;
	\item[(SI-R)] scale invariance:  if $r_T \in \mathcal{X}$ and $\delta > 0$ s.t. $\delta r_T \in \mathcal{X}$, then $\alpha(\delta r_T) =  \alpha(r_T)$;
	\item [(DB-R)] distribution-based: $\alpha(r_T)$ only depends on the distribution of $r_T$ under $P$.
\end{itemize}

Following the approach used for risk and reward measures, we introduce ESG reward--risk ratios (ESG-RRRs). Let $X_T=[r_T, \; \esg_T]'$. We define an ESG-RRR $\bfalpha_\lambda: \mathcal{X}_2 \rightarrow \mathbb{R} \cup \{\pm \infty\}$ by
\begin{equation}\label{eq:ESG-RRR}
\bfalpha_\lambda(X_T) := \dfrac{\bftheta_\lambda(X_T)^+}{\bfrho_\lambda(X_T)^+},
\end{equation}
where $\bftheta_\lambda(X_T)$ and $\bfrho_\lambda(X_T)$ are ESG reward and risk measures as defined in Section \ref{sec:risk_measures} and Appendix \ref{sec:reward} using the same preference parameter $\lambda$.\footnote{
	In principle, it is possible for the risk and reward measures to have different values of $\lambda$. We consider the case of values of lambda for both the numerator and the denominator for conciseness.}
The extension of the Cheridito-Kromer conditions for ESG-RRRs are:

\begin{itemize}[align = left]
	\item[(MO-RM)] monotonicity: if $X_{1,T},X_{2,T} \in \mathcal{X}_2$ and ($r_{1,T}\leq r_{2,T} \wedge \esg_{1,T}\leq \esg_{2,T}$) a.s., then $\bfalpha_\lambda(X_{1,T}) \leq \bfalpha_\lambda(X_{2,T})$;
	\item[(QC-RM)] quasi-concavity: if $X_{1,T},X_{2,T} \in \mathcal{X}_2$ and $\delta \in [0,1]$, then $\bfalpha_\lambda(\delta X_{1,T} + (1-\delta) X_{2,T}) \geq \min(\bfalpha_\lambda(X_{1,T}),\bfalpha_\lambda(X_{2,T}))$;
	\item[(SI-RM)] scale invariance:  if $X_T \in \mathcal{X}_2$ and $\delta > 0$ s.t. $\delta X_T \in \mathcal{X}_2$, then $\bfalpha_\lambda(\delta X_T) =  \bfalpha_\lambda(X_T)$;
	\item [(DB-RM)] distribution-based: $\bfalpha_\lambda(X_T)$ only depends on the distribution of $X_T$ under $P$.
\end{itemize}

Verification of DB-RM depends on the bivariate distribution of $X_T \in \mathcal{X}_2$ and not on the univariate distribution of the returns.

\begin{proposition} \label{prop:RRR}
	The ESG-RRR  \eqref{eq:ESG-RRR}, where $\bftheta_\lambda(X_T)$ is an ESG-coherent risk measure and $\bfrho_\lambda(X_T)$ is an ESG-coherent reward measure (Appendix \ref{sec:reward}) satisfies conditions MO-RM, QC-RM and SI-RM.
\end{proposition}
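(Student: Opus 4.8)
The plan is to verify the three conditions directly from the scalar properties of the reward and risk measures that make up the ratio \eqref{eq:ESG-RRR}, recalling that in that ratio the numerator $\bftheta_\lambda$ is the ESG-coherent reward measure and the denominator $\bfrho_\lambda$ the ESG-coherent risk measure (as in the sentence preceding the definition). The facts I would use are: $\bfrho_\lambda$ is positively homogeneous (PH-M), monotone in the sense of MO-M, and convex (from SUB-M and PH-M, as noted after the definition of ESG-coherent risk measures); and $\bftheta_\lambda$ is positively homogeneous, monotone increasing, and \emph{concave} — the last following from the reward-measure analogues of super-additivity and PH-M given in Appendix \ref{sec:reward}. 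I would first fix the arithmetic conventions on $\mathbb{R}\cup\{\pm\infty\}$ used by \cite{cheridito2013reward} for the quotient $\bftheta_\lambda^+/\bfrho_\lambda^+$ (so that $\bfalpha_\lambda\ge 0$ everywhere, $a/0=+\infty$ for $a>0$, and $0/0=0$), since every step below must be checked against the degenerate cases where a numerator or denominator vanishes or blows up.

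SI-RM and MO-RM are the routine steps. For SI-RM, positive homogeneity of both $\bftheta_\lambda$ and $\bfrho_\lambda$ gives $\bftheta_\lambda^+(\delta X_T)=\delta\,\bftheta_\lambda^+(X_T)$ and $\bfrho_\lambda^+(\delta X_T)=\delta\,\bfrho_\lambda^+(X_T)$ for $\delta>0$, because $(\cdot)^+$ commutes with multiplication by a positive scalar; the factor $\delta$ then cancels in the ratio, and I would check separately that the cancellation survives in the $\infty$ and $0/0$ cases. For MO-RM, if $r_{1,T}\le r_{2,T}$ and $\esg_{1,T}\le\esg_{2,T}$ a.s., then monotonicity of the reward gives $\bftheta_\lambda(X_{1,T})\le\bftheta_\lambda(X_{2,T})$ and MO-M gives $\bfrho_\lambda(X_{1,T})\ge\bfrho_\lambda(X_{2,T})$; since $(\cdot)^+$ is nondecreasing, the numerator of $\bfalpha_\lambda$ is nondecreasing and the denominator nonincreasing along the order, so the quotient is nondecreasing, which is exactly $\bfalpha_\lambda(X_{1,T})\le\bfalpha_\lambda(X_{2,T})$.

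The substance of the proof is QC-RM, and this is the step I expect to be the main obstacle, because the truncation $(\cdot)^+$ destroys global concavity of the numerator (indeed $\bftheta_\lambda^+$ can be convex) while only the denominator $\bfrho_\lambda^+$ retains global convexity. The idea is to argue on superlevel sets. Put $m=\min(\bfalpha_\lambda(X_{1,T}),\bfalpha_\lambda(X_{2,T}))$ and let $X_{\delta,T}=\delta X_{1,T}+(1-\delta)X_{2,T}$; if $m\le 0$ the claim is immediate since $\bfalpha_\lambda\ge0$, so I would assume $m>0$. Then $\bftheta_\lambda(X_{i,T})>0$ for $i=1,2$, so on these two points the truncation is inactive and $\bftheta_\lambda^+=\bftheta_\lambda$; this is precisely what lets concavity re-enter. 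Concavity of $\bftheta_\lambda$ then yields $\bftheta_\lambda(X_{\delta,T})\ge \delta\,\bftheta_\lambda^+(X_{1,T})+(1-\delta)\,\bftheta_\lambda^+(X_{2,T})>0$, so $\bftheta_\lambda^+(X_{\delta,T})=\bftheta_\lambda(X_{\delta,T})$, while convexity of $\bfrho_\lambda^+$ gives $\bfrho_\lambda^+(X_{\delta,T})\le \delta\,\bfrho_\lambda^+(X_{1,T})+(1-\delta)\,\bfrho_\lambda^+(X_{2,T})$. Using the two defining inequalities $\bftheta_\lambda^+(X_{i,T})\ge m\,\bfrho_\lambda^+(X_{i,T})$, I would chain these to obtain $\bftheta_\lambda^+(X_{\delta,T})\ge m\,\bfrho_\lambda^+(X_{\delta,T})$, and conclude $\bfalpha_\lambda(X_{\delta,T})\ge m$.

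Finally I would close the one remaining gap, the convention-sensitive case $\bfrho_\lambda^+(X_{\delta,T})=0$: there the chained inequality only gives $\bftheta_\lambda^+(X_{\delta,T})\ge 0$, so instead I would invoke $\bftheta_\lambda^+(X_{\delta,T})=\bftheta_\lambda(X_{\delta,T})>0$ established above to get $\bfalpha_\lambda(X_{\delta,T})=+\infty\ge m$. The same bookkeeping over vanishing and infinite values is what makes SI-RM and MO-RM fully rigorous. I would not assert DB-RM here, since distribution-invariance of $\bfalpha_\lambda$ is inherited only when both $\bftheta_\lambda$ and $\bfrho_\lambda$ are themselves distribution-based, an independent hypothesis not implied by ESG-coherence.
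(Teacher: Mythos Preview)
Your proof is correct and follows the same strategy as the paper: MO-RM from the monotonicity of numerator and denominator, SI-RM from their positive homogeneity, and QC-RM from the concavity of $\bftheta_\lambda$ together with the convexity of $\bfrho_\lambda$. The paper's own proof is a three-line sketch that simply names these implications, whereas you supply the superlevel-set argument for QC-RM and handle the degenerate $0$ and $+\infty$ cases created by the $(\cdot)^+$ truncation; this extra care is warranted but does not change the underlying approach.
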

\begin{proof}
	MO-RM follows from the monotonicity of $\bfrho_\lambda(X_T)$ and $\bftheta_\lambda(X_T)$.
	QC-RM follows from the convexity of $\bfrho_\lambda(X_T)$ and the concavity of $\bftheta_\lambda(X_T)$.
	SI-RM follows from the corresponding properties of $\bfrho_\lambda(X_T)$ and $\bftheta_\lambda(X_T)$.
\end{proof}

\begin{remark}
 Proposition \ref{prop:RRR} does not guarantee that the DB-RM property is satisfied since, as discussed in \citep{cheridito2013reward}, risk and reward measures may not depend on the distribution of returns and ESG under a single probability measure, and in such case the distribution-based property is not satisfied. This is for example the case of distributionally robust reward--risk ratios, which consider the fact that agents do not know with certainty the distribution of the random variables, and consider a whole family of distributions \citep[see e.g.][]{liu2017distributionally}.
\end{remark}

\subsection{Examples of ESG reward--risk ratios}

We present six examples of ESG reward--risk ratios derived from RRRs commonly used in the literature. The ratios are obtained by generalizing the univariate reward--risk ratios using the approach described by \eqref{eq:esgrho}. 
Note that, as in the case of risk measures, the definition of a reward--risk ratio can be based on several alternative specifications of the random variable $X_T=[r_T,\;\esg_T]'$. In particular, the ratios can be computed using the returns, the excess returns over a risk-free rate, the final wealth, profit and/or losses, etc. The same logic applies to the ESG component. Here, we only provide illustrative guidance concerning which approach to use in practice; the choice depends on the specific needs of the practitioner or regulator who uses these measures.

\begin{itemize}[align = left, leftmargin = *]
	\item \textbf{ESG Sharpe ratio}. The Sharpe ratio is the ratio between the excess return of an asset and its standard deviation  over a period of time.  The ESG Sharpe ratio is
	\begin{equation}
	\esgd\sr_\lambda(X_T) := \dfrac{\esgd\mu_\lambda(X_T-\sa_T)}{\esgd\sigma_\lambda(X_T)},
	\end{equation}
	where
	\begin{equation}
	\esgd\mu_\lambda(X_T) := (1-\lambda)\mathbb{E}[r_T] + \lambda \mathbb{E}[\esg_T],
	\end{equation}	
 the ESG standard deviation $\esgd\sigma_\lambda(X_T)$ is given by (\ref{eq:esgsd}), and $\sa_T = [\rft, \;\rftesg]'$ is the bivariate return of an ESG safe asset.
 $\esgd\mu_\lambda$ is an ESG-coherent reward measure, while $\esgd\sigma_\lambda$ is not an ESG-coherent risk measure.  
	The ESG Sharpe ratio satisfies QC-RM, SI-RM, and DB-RM, but it does not satisfy condition MO-RM.\footnote{
	The SI-RM property applies if $\sa=[0, \; 0]'$ or if $X$ is intended to represent a vector of excess returns over the risk-free rate.}
	\item \textbf{ESG Rachev ratio}.
	The univariate Rachev ratio is defined as \citep{biglova2004different}
	\begin{equation}
	\rr_{\beta,\gamma}(r_T) := \dfrac{\avar_\beta (-r_T)}{\avar_\gamma(r_T)}.
	\end{equation}
	We generalize to an ESG-RR by replacing $\avar$ with $\esgd \avar$. Let $X_T \in \mathcal{X}_2$. Then
	
	\begin{equation}
	\esgd\rr_{\beta,\gamma,\lambda}(X_T) := \dfrac{\esgd\avar_{\lambda,\beta}(-X_T)}{\esgd\avar_{\lambda,\gamma}(X_T)}.
	\end{equation}
	
	Note that $\esgd\rr$ satisfies MO-RM, SI-RM and DB-RM but, as in the univariate case, fails to satisfy QC-RM since the numerator is not concave.
	\item \textbf{ESG STAR ratio}. When $\beta=0$, the ESG Rachev ratio becomes an ESG generalization of the stable tail-adjusted return ratio,
	\begin{equation}
	\esgd\starr_{\alpha,\lambda}(X_T) := \dfrac{\esgd \mu_{\lambda}(X_T)}{\esgd\avar_{\lambda,\alpha}(X_T)}.
	\end{equation}
	As a special case of the ESG Rachev ratio, $\esgd\starr_{\alpha,\lambda}$ satisfies conditions MO-RM, SI-RM and DB-RM;
	as the numerator is linear, it also satisfies QC-RM.
	\item \textbf{ESG Sortino--Satchell ratio}.
	The univariate Sortino--Satchell ratio \citep{sortino2001managing} is defined as $\mathbb{E}[r_T]^+/||r_T^-||_p$,
	where $||r_T||_p=\left(\int_{-\infty}^{\infty} |x|^p f_r(x)dx\right)^{1/p}$  
	and $r_T \sim f_r(x)$.
	We extend this measure to the bivariate ESG setting by\footnote{
		The definition provided here assumes a target return of 0, similarly to \cite{cheridito2013reward}.}
	\begin{equation}
	\esgd \ssr_\lambda(X_T) = \dfrac{\left(\esgd \mu_\lambda(X_T)\right)^+}{||Y_T^-||_p},
	\end{equation}
	where $Y_T= (1-\lambda)r_T + \lambda \esg_T$ and $Y_T \sim f(y)$.
	This measure satisfies all four ESG-RRR conditions. 
    
	The proposed formulation assumes a required rate of return and ESG target of zero.
	We can introduce a non-zero target by subtracting a bivariate vector (e.g. the bivariate return of an ESG safe asset $\sa_T = [\rft, \;\rftesg]'$) from the numerator before applying the positive operator and using $\tilde{Y}_T = (1-\lambda)(r_T-\rft) + \lambda (\esg_T-\rftesg)$ in the denominator in place of $Y_T$. In such a case, this measure no longer satisfies SI-RM.
	\item \textbf{ESG Omega ratio}. Defining $Y_T= (1-\lambda)r_T + \lambda \esg_T$, and $F(y)$ as the cumulative distribution function of $Y_T$,
	the ESG version of the Omega ratio \citep{keating2002universal} is
	\begin{equation}
	\esgd \OR_\lambda(X_T) = \dfrac{\int_\tau^{\infty}[1-F(y)]dy^+}{\int_{-\infty}^\tau F(y)dy}.
	\end{equation}
	Equivalently \citep[see][]{farinelli2008sharpe},
	\begin{equation}
	\esgd \OR_\lambda(X_T) = \dfrac{\Eb[(Y_T-\tau)^+]}{\Eb[(Y_T-\tau)^-]}.
	\end{equation}
	This measure satisfies MO-RM and DB-RM. It does not satisfy QC-RM, and SI-RM is only satisfied if $\tau=0$.
	\item \textbf{ESG Farinelli--Tibiletti ratio}. This ratio aims to take into account the asymmetry in the return distribution; it generalizes the Omega ratio. By defining $Y_T= (1-\lambda)r_T + \lambda \esg_T$, the ESG version of the ratio can be defined as
	\begin{equation}
\esgd \ftr_{\lambda,m,n,p,q}(X_T) = \dfrac{||(Y_T-m)^+||_p}{||(n-Y_T)^+||_q},
\end{equation}
	where $m,n \in \Rb$ and $p,q>0$. This ratio satisfies MO-RM and DB-RM. It also satisfies SI-RM if $n=m=0$. Mirroring the univariate case, it does not satisfy QC-RM (see the example in \citealp[][Section 4.3]{cheridito2013reward}).
\end{itemize}

Table \ref{tab:ESG-RRR} summarizes the properties satisfied by each of these six ratios.
{\begin{table}[!h]
\begin{center}
	\caption{Summary of ESG reward--risk ratios and their properties. A checkmark indicates that the corresponding property is satisfied.}
	\label{tab:ESG-RRR}
	\begin{tabular}{lcccc}
		\hline 
		\textbf{Ratio}		 & \textbf{MO-RM} & \textbf{QC-RM} & \textbf{SI-RM} &\textbf{DB-RM} \\ \hline 
		ESG Sharpe		 &			     & $\cm$		   & $\cm$		& $\cm$\\
		ESG Rachev		 & $\cm$		     &				  & $\cm$		& $\cm$\\
		ESG STAR			 & $\cm$		     & $\cm$		  & $\cm$		& $\cm$\\
		ESG Sortino--Satchell & $\cm$		     & $\cm$		  & $\cm$		& $\cm$\\
		ESG Omega		 & $\cm$		     &				 & $\cm$		& $\cm$\\
		ESG Farinelli--Tibiletti  & $\cm$	     &				 & $\cm$		& $\cm$\\\hline
	\end{tabular} 
    \end{center}
\end{table}}

\section{Empirical analysis}\label{sec:empirical}

We present an empirical analysis in which we estimate a set of daily ESG risk measures and ESG reward--risk ratios. We use such measures to rank equity assets from the Dow-Jones Industrial Average (DJIA) index, and the main goal is to illustrate how rankings vary with the preference parameter $\lambda$. We use the log-returns computed on adjusted close prices downloaded from Factset, and for the ESG component we consider the Factset Truvalue scores. These ESG scores, in contrast to scores by other data providers, are not computed on the basis of balance sheet items and metrics reported by companies, but instead they are obtained by analysing news and documents, performing sentiment analyses on news and event tracking on sustainability-related events with artificial intelligence techniques. Such scores are based on a large number of sources, and reflect the public perception of the investors on the sustainability of companies. One unique feature of these scores is that they are updated with daily frequency, allowing the dynamics of the sustainability of each company to be tracked. In contrast, other data providers update ESG scores with quarterly or yearly frequency. The Truvalue scores represent therefore a desirable dataset for showcasing ESG-coherent risk measures and reward ratios. Truvalue scores are provided in two variants: \textit{Pulse Scores} and \textit{Insight Scores}, where the former focus more on near-term performance changes, and the latter considers the longer-term track record. In this analysis we consider Pulse Scores, as they provide a more dynamic measure of the sustainability profile of the companies. The high-frequency of the Truvalue dataset however comes at the cost of a less transparent scoring methodology compared to competitors due to the reliance to AI techniques.

 \begin{figure}[!h]
	\centering
	\includegraphics[width=1\linewidth]{./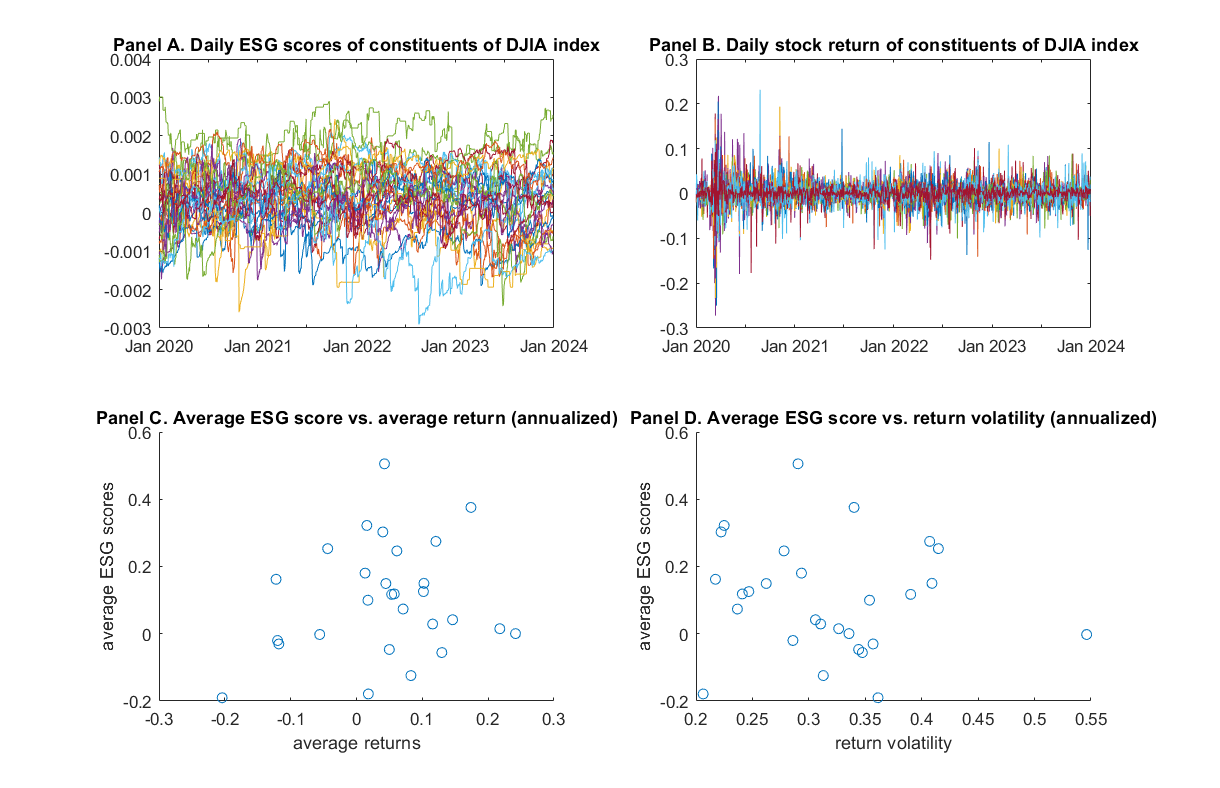}
	\caption{Panel A: Evolution of the daily ESG score of the constituents of the index. Panel B: Evolution of stock return. Panel C: Scatterplot of average annualized daily returns and the average annualized ESG score for each asset. Panel D: Scatterplot of the annualized daily standard deviation of returns and the annualized average ESG score.}
	\label{fig:plot_ESG_prices}
\end{figure}

We consider 28 constituents of the DJIA index as of December 31, 2023. We use the daily time series of log-returns ($r_T$) and Truvalue Pulse Scores ($\esg_T$) in the period from January 1, 2020 to December 31, 2023. The daily Pulse Score is normalized between $-1/c$ and $1/c$, where $c = 252$ (the number of trading days in a year). The normalization leads to a broadly similar scale of average returns and average ESG (see Table \ref{tab:Description}) The scaling of the ESG score, as long as it is consistent across assets, does not affect the generality of the results, as it can be compensated by changing the scaling of the subjective parameter $\lambda$. One concern with the dataset is that, when ESG scores are strictly bounded, some axiomatic conditions may lose their standard economic interpretation. In applications where this issue is relevant, one may adopt the modeling convention of working with a monotone transformation of the scores that has unbounded support, thereby restoring a more formal setting without altering their ordinal content.

Figure \ref{fig:plot_ESG_prices}, illustrates the series of ESG scores for the constituents of the index (Panel A), and the log-return of the assets over time (Panel B). Panels C and D report the average ESG score across the entire period in relation to the average returns and the return volatility, respectively; looking at this aggregate data, the relation between the ESG score and the financial performance seems limited. Higher ESG scores seem to correlate slightly with higher returns, but the relation does not appear to be strong (correlation: $0.18$). We see a small negative correlation between ESG scores and volatility (correlation: $-0.12$).
 
 \begin{table}[!h]	
 \begin{center}
 	\begin{scriptsize}
 		\caption{Company name, ticker symbol and GICS sector; annualized mean and standard deviation of daily 
 			returns ($\E[r]$, $\sigma(r)$) and ESG scores ($\E[\esg]$, $\sigma(\esg)$), correlation of daily returns
 			and ESG score ($\rho(r,\esg)$) for the period 2020-2023.}
 			\label{tab:Description}
 		{\scriptsize
 			\begin{tabular}{l c rrrrr}
 				\toprule
 				Name & GICS Sector & $\E[r]$ & $\sigma(r)$ & $\E[\esg]$ & $\sigma(\esg)$ & $\rho(r,\esg)$\\
 				\midrule
 				Apple Inc.                                   &  IT 			&  $ 0.242$  &  $0.336$  & $         0.000$ & $0.005$ & $-0.025$ \\   
 				Amgen Inc.                                 &  Health Care	&  $ 0.045$  &  $0.262$  & $ 0.149$ & $0.012$ & $-0.002$ \\   
 				American Express Co.            &  Financials		&  $ 0.102$  &  $0.409$  & $  0.150$ & $ 0.010$ & $-0.015$ \\   
 				Boeing Company                      &  Industrials		&  $-0.056$  &  $0.546$  & $-0.002$ & $0.009$ & $ 0.041$ \\   
 				Caterpillar Inc.                          &  Industrials		&  $ 0.174$  &  $ 0.340$   & $ 0.376$ & $0.006$ & $-0.019$ \\   
 				Salesforce, Inc.                         &  IT			&  $ 0.121$  &  $0.407$  & $ 0.275$ & $0.007$ & $-0.035$ \\   
 				Cisco Systems, Inc.                  &  IT 			&  $ 0.013$  &  $0.293$  & $  0.180$ & $0.008$ & $  0.060$ \\   
 				Chevron Corporation             &  Energy		&  $ 0.053$  &  $ 0.390$  & $ 0.117$ & $0.007$ & $ 0.018$ \\   
 				Walt Disney Company           &  Comm. Services	&  $-0.118$  &  $0.357$  & $ -0.030$ & $0.008$ & $ 0.035$ \\   
 				Goldman Sachs Gr., Inc.         &  Financials		&  $  0.130$  &  $0.347$  & $-0.056$ & $ 0.010$ & $ -0.030$ \\   
 				Home Depot, Inc.                    &  Consumer Discr.	&  $ 0.116$  &  $ 0.310$  & $ 0.029$ & $ 0.010$ & $-0.008$ \\   
 				Honeywell Int. Inc.                &  Industrials		&  $ 0.042$  &  $ 0.290$  & $ 0.505$ & $0.007$ & $ 0.047$ \\   
 				IBM Corporation 		 &  IT				&  $ 0.061$  &  $0.278$  & $ 0.246$ & $0.005$ & $ 0.011$ \\   
 				Intel Corporation                    &  IT			&  $-0.044$ &  $0.415$  & $ 0.253$ & $0.008$ & $ 0.007$ \\   
 				Johnson \& Johnson              &  Health Care		&  $ 0.018$  &  $0.206$  & $-0.179$ & $0.008$ & $ 0.024$ \\   
 				JPMorgan Chase \& Co.       &  Financials		&  $  0.050$  &  $0.344$  & $-0.047$ & $0.009$ & $-0.022$ \\   
 				Coca-Cola Company             &  Consumer Staples&  $ 0.016$  &  $0.225$  & $ 0.322$ & $0.004$ & $ 0.037$ \\   
 				McDonald's Corp.                  &  Consumer Discr. 	&  $ 0.102$  &  $0.247$  & $ 0.125$ & $0.006$ & $ -0.010$ \\   
 				3M Company                           &  Industrials		&  $ -0.120$  &  $0.286$  & $ -0.020$ & $0.013$ & $  0.010$ \\   
 				Merck \& Co., Inc.                 &  Health Care 		&  $ 0.057$  &  $0.241$  & $ 0.118$ & $0.005$ & $ 0.023$ \\   
 				Microsoft Corporation         &  IT				&  $ 0.218$  &  $0.326$  & $ 0.015$ & $0.006$ & $ 0.011$ \\   
 				NIKE, Inc. Class B                   &  Consumer Discr. 	&  $ 0.017$  &  $0.354$  & $  0.100$ & $0.008$ & $-0.007$ \\   
 				Procter \& Gamble Co.       &  Consumer Staples	&  $  0.040$  &  $0.222$  & $ 0.303$ & $0.009$ & $-0.024$ \\   
 				Travelers Companies, Inc.&  Financials		&  $ 0.083$  &  $0.313$  & $-0.124$ & $0.014$ & $-0.021$ \\   
 				UnitedHealth Group Inc.  &  Health Care		&  $ 0.146$  &  $0.306$  & $ 0.042$ & $0.008$ & $ 0.016$ \\   
 				Verizon Comm. Inc.            &  Comm. Services	&  $-0.122$  &  $0.217$  & $ 0.162$ & $0.007$ & $ 0.005$ \\   
 				Walgreens Boots All., Inc. &  Consumer Staples &  $-0.204$  &  $0.361$  & $ -0.190$ & $0.013$ & $ 0.017$ \\   
 				Walmart Inc.                           &  Consumer Staples	&  $ 0.071$  &  $0.237$  & $ 0.074$ & $0.006$ & $-0.006$ \\
 				\bottomrule
 			\end{tabular}}
 	\end{scriptsize}
    \end{center}
 \end{table}
 
Table \ref{tab:Description} reports, by company, the distribution mean and standard deviation of daily returns and ESG scores, as well as the correlation between returns and ESG scores. The dataset presents a wide cross-sectional variability in terms of the average return and risk, as well as in terms of the average ESG scores. ESG data show a smaller volatility compared to the returns despite a comparable cross-sectional dispersion of average returns and average ESG. This reflects the fact that the changes in the sustainability profile of a company happen at a much slower pace than the shifts seen in the monetary value of a stock. The correlation between the daily returns and ESG score is very close to zero for all the stocks. Figure \ref{fig:corr_matrix} displays the correlation matrix between the returns and ESG of each stock (the first 28 rows and columns are the returns, the latter 28 are the ESG values). We see that the returns show consistently positive and strong correlation with each other. In contrast, ESG-to-ESG correlations have mixed signs. ESG-to-return correlations are very close to zero. In terms of risk management, the lack of positive correlations among ESG and returns is beneficial to investors, as it allows diversification of risk between the two, supporting the motivation of the axiomatic framework.

\begin{figure}[!h]
	\centering
	\includegraphics[width=0.6\linewidth]{./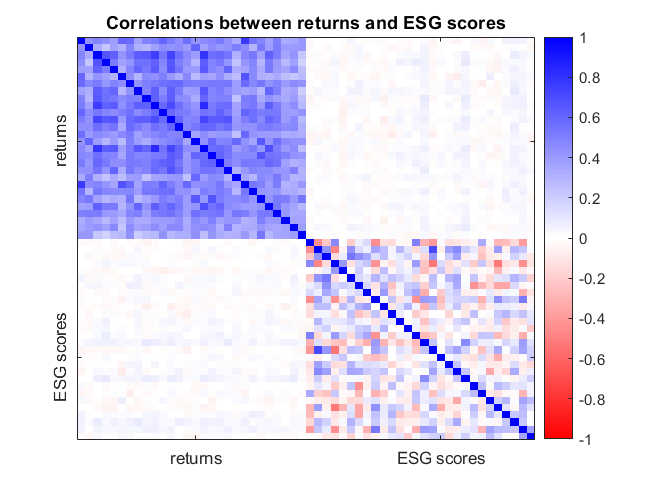}
	\caption{Correlations between returns and ESG of the companies in the sample.}
	\label{fig:corr_matrix}
\end{figure}

\begin{figure}[!h]
	\centering
	\includegraphics[width=1\linewidth]{./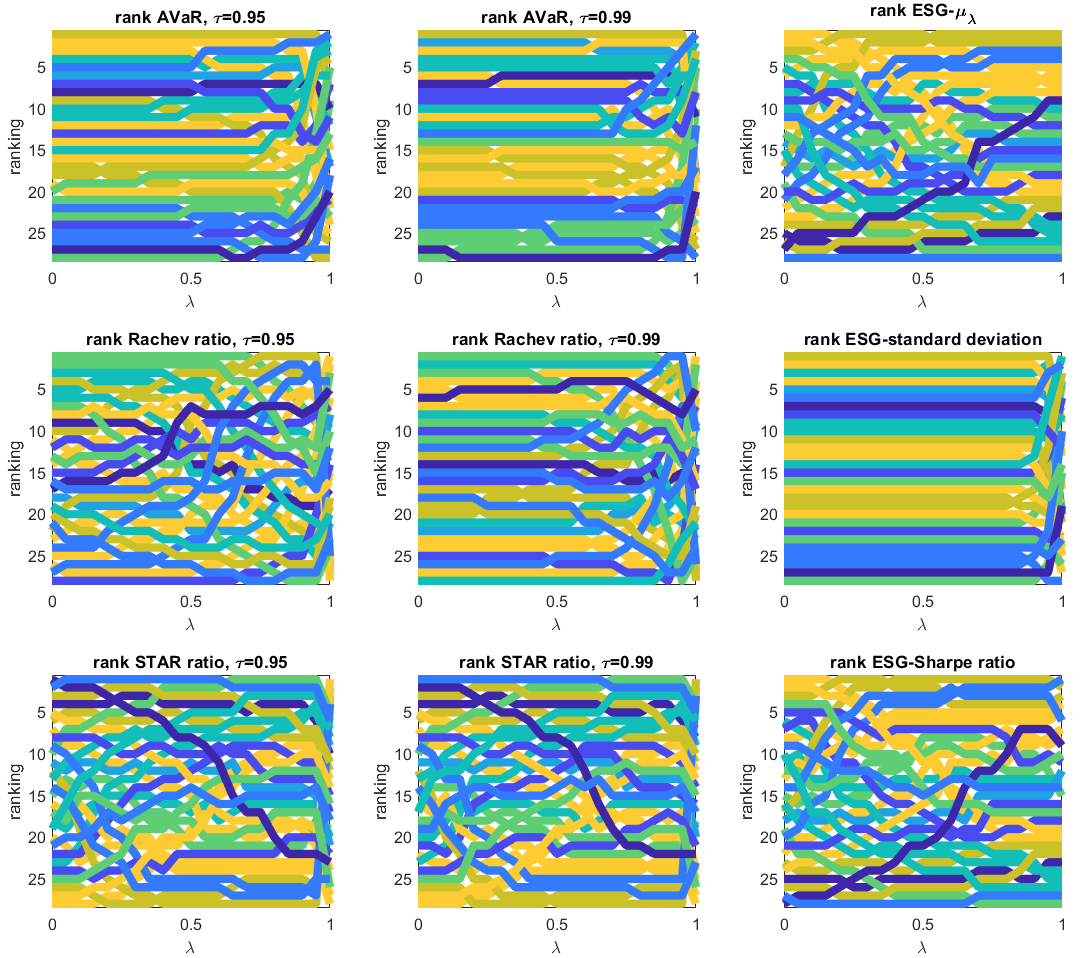}
	\includegraphics[width=0.9\linewidth]{./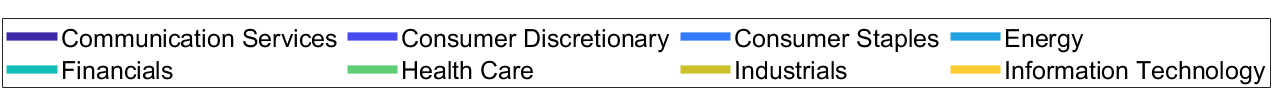}
	\caption{Visual representation of the ranking of the assets in the DJIA index based on $\esgd\avar_{\lambda,\tau}$ ($\tau$ equal to 0.95 and 0.99), $\esgd\mu_\lambda$, the ESG Rachev ratio ($\tau$ equal to 0.95 and 0.99), the standard deviation, $\esgd \starr_{\alpha,\lambda}$ (the STAR ratio with p = 2), and the Sharpe ratio. Values of $\lambda$ are arranged on the X axis from 0 (only the monetary component considered, left) to 1 (only the ESG component considered, right), and companies are color-coded by sector.}
	\label{fig:ranking}
\end{figure}

For each of the stocks, we compute a set of daily ESG risk measures and ESG reward--risk ratios, considering a range of values of $\lambda$. We then use the measures to rank the stocks in the dataset. In particular, we compute $\esgd \avar_{\lambda,\tau}$ (with $\tau$ equal to 0.95 and 0.99), the ESG standard deviation ($\esgd\sigma_\lambda$), the ESG-mean ($\esgd \mu_{\lambda}$), the ESG Rachev ratio (with $\tau$ equal to 0.95 and 0.99), the ESG STAR ratio (with $\tau$ equal to 0.95 and 0.99), and the ESG Sharpe ratio. The risk measures are computed using a historical simulation approach, employing all the observations in the period 2020-2023. The range of $\lambda$ considered is intentionally extreme and serves as a stress test for the framework. In practice, investors would typically choose smaller values of $\lambda$, resulting in portfolios that remain closer to conventional market portfolios.
Figure \ref{fig:ranking} displays the ordering of the assets: each company is color-coded according to the industrial sector in which it operates, and the companies in the upper part of the plot are those with the highest value for the indicator. On the horizontal axis, we vary the investor's $\lambda$ from zero to one. Tables \ref{tab:ranking1} and \ref{tab:ranking2} in Appendix \ref{sec:tables_rank} report the names of the companies in the top and bottom five positions for a selected sample of indicators with $\lambda \in \{ 0, 0.25, 0.5, 0.75, 1\}$.\footnote{For brevity we report only the measures for the 95th quantile.
	The complete rankings and the results for the 99th quantile are available upon request.}

We make the following observations: 

\begin{itemize}
	\item ESG-AVaR tends to give similar rankings for levels of $\lambda$ up to 0.5; then, the ranking changes significantly and converges to a significantly different ranking for $\lambda=1$ compared to $\lambda=0$. We can explain this behavior by taking into consideration that the ESG scores are characterized by a smaller variability over time; hence the tails of the distributions of returns is the most relevant determinant of the ESG-AVaR (and thus the ranking) even for relatively large values of $\lambda$. For irrealistically large values of $\lambda$ close to 1, the ESG component becomes dominant in the determination of the ranking. The results are similar for $\tau=0.95$ and $\tau=0.99$; the main difference is that for the highest value of $\tau$, the monetary component remains dominant for higher values of $\lambda$.
	\item The ranking according to the Rachev ratio is also stable for $\lambda$ smaller than  $0.5$. This is related to the fact that this ratio is computed as the ratio of the AVaRs for the top and bottom tails of the distribution.
	\item The ranking according to the STAR ratio shows significant changes when $\lambda$ changes. This can be attributed in large part to the fact that the numerator of the ratio ($\esgd\mu_\lambda$) by construction changes significantly with $\lambda$.
	\item The ranking according to the standard deviation is almost identical for all values of $\lambda$, except for values very close to 1. This is because, as stated previously, the variability of ESG scores is small, and the ESG standard deviation is almost exclusively driven by monetary returns. Unlike the AVaR, the standard deviation is not influenced by parallel shifts of the distribution; hence, increasing the value of $\lambda$ does not cause any changes in the ranking.
	\item $\esgd\mu_\lambda$, the ESG Sharpe ratio, and the ESG Sortino-Satchell ratio rankings change significantly for different values of $\lambda$. These changes are driven by the significant differences in the rankings of the average returns and average sustainability, which is also suggested by Panel C in Figure \ref{fig:plot_ESG_prices}.
\end{itemize}

Overall, the example shows the structural flexibility and interpretability of the proposed framework, which can be used to develop empirical analyses both to answer theoretical questions and to implement viable investment strategies suitable for ESG-oriented investors. Further empirical research is required for more extensive results, in particular for the modelization of the joint evolution of returns and ESG using stochastic process and the extension to more assets and geographical areas. Another natural extension is a benchmarking of the results against more ESG data providers, especially considering that the ESG scores from different rating agencies are often inconsistent \citep{billio2021inside}.

\section{Policy \& Practical Implications} \label{policy}

Modeling investment outcomes as a bivariate random vector $X_T = [ r_T, \; \esg_T ]'$, with 
$r_T$ denoting financial payoffs and $\esg_T$ the sustainability dimension, extends the standard univariate risk and reward measure framework in a structurally meaningful way. ESG-coherent risk measures embed sustainability considerations directly into the risk functional through a preference parameter $\lambda$, allowing ESG to enter portfolio choice in the same formal manner as other preference-driven dimensions. This provides a unified setting in which financial and ESG components are evaluated jointly, rather than sequentially or through external constraints.

For optimal asset allocation, this formulation yields well-defined optimization problems in which portfolio choice depends on the joint distribution of $X_t$, with only minimal modifications relative to standard univariate setups. ESG preferences are reflected through $\lambda$, inducing smooth adjustments in optimal portfolios reflecting the subjective trade-off between the monetary and the sustainability dimensions. Unlike exclusion or screening approaches, this mechanism preserves diversification properties and allows for a systematic analysis of sensitivity with respect to ESG preferences, dependence between $r_T$ and $\esg_T$, and joint tail behavior. From a theoretical perspective, it opens the way to studying ESG-aware allocation using familiar tools from coherent risk theory and convex optimization.

The framework is deliberately focused on ESG-oriented investing, in the sense that ESG enters the model as an argument of investor preferences rather than as a predictive signal for financial returns. This does not preclude the use of ESG information for improving financial performance, for example through forecasting, risk premia estimation, or alpha generation. Rather, the advantage of the proposed approach is to keep these two roles of ESG analytically distinct: ESG can be treated either as a component of the payoff vector $X_T$ reflecting non-financial objectives, or as an informational variable influencing the distribution of $r_T$. By separating preference-driven valuation from return-predictive use, the framework avoids conflating ethical objectives with performance claims and clarifies the economic interpretation of ESG integration, while remaining flexible enough to accommodate their joint use within advanced asset allocation strategies designed to address multiple, potentially heterogeneous, investment objectives.

The stochastic treatment of $\esg_T$ also has direct implications for ESG data provision. Meaningful ESG-coherent risk measurement requires time-series ESG data at frequencies compatible with financial modeling, creating incentives for data providers to move beyond static or purely cross-sectional scores. Higher-frequency and methodologically consistent ESG measures directly improve estimation of the joint law of $X_T = [ r_T, \; \esg_T ]'$, its dependence structure, and joint tail risk, rather than merely refining descriptive asset rankings.

In terms of transparency and reporting, ESG-coherent risk measures make the aggregation of financial and ESG outcomes explicit. The role of preferences is isolated in the parameter 
$\lambda$, which facilitates comparability across portfolios and products by separating differences due to investor objectives from differences due to underlying risk exposures. For institutional investors, this supports internal coherence between mandates, portfolio construction, and performance evaluation. For retail investors, it clarifies the trade-offs embedded in ESG-oriented products, reducing reliance on implicit or opaque screening rules.

Finally, the framework makes explicit several structural limitations that are often implicit in ESG integration. The choice of $\lambda$ is normative and cannot be identified from data alone, and the quality of ESG-coherent risk estimates is constrained by data availability and by the modeling of dependence between $r_T$ and $\esg_T$. Making these elements explicit shifts attention from the proliferation of ESG scores to the underlying statistical and preference-based structure of risk measurement for ESG-oriented investors, where further theoretical and empirical progress is most likely to be achieved.

\section{Conclusions} \label{sec:conclusions}

Individuals and institutions are increasingly aware of the non-monetary impact of their investments, and many are willing to structure their portfolios considering not only monetary risk and gains, but also the environmental, social, and governance implications. These ESG-oriented investors, who are rational under a richer set of objectives, challenge the traditional financial literature and require the development of new analytical tools to describe their behavior. Our work contributes to the literature by introducing ESG-coherent risk measures, a framework for the measurement of risk for investors with both monetary and ESG goals that provides an axiomatic definition in which risk is measured as a function of a bivariate random variable. The investor in our approach is still rational but follows a multi-dimensional evaluation that considers not only the monetary part but also sustainability. This framework extends the traditional coherent risk measures approach of \cite{artzner1999coherent}.

The measures we provide can be used in several contexts and, due to the introduction of a parameter $\lambda$, can be adapted to individuals with different relative preferences for the monetary and ESG components of risk. We also provide the dual representation for ESG-coherent risk measures, present several examples that generalize univariate risk measures, and introduce ESG-coherent reward measures and ESG reward--risk ratios.

We stress that the goal of the proposed approach is not to integrate ESG scores for improving monetary risk-adjusted performances, but to take into account the ethical preference of an investor for sustainable assets. This challenges the traditional assumptions of monetary profit-maximizing and risk-minimizing agents.

This paper is only an initial step in the development of a new financial theory that will be capable of describing the behavior of ESG-oriented investors. Future work will study optimal asset allocations, utility theory, and asset pricing for ESG-oriented investors. 

\section*{Acknowledgements}
This study was funded by the European Union - NextGenerationEU, in the framework of the GRINS - Growing Resilient, INclusive and Sustainable project (GRINS PE00000018 -- CUP F83C22001720001). The views and opinions expressed are solely those of the authors and do not necessarily reflect those of the European Union, nor can the European Union be held responsible for them.

\clearpage

\begin{appendices}

\section{ESG-coherent reward measures}\label{sec:reward}

Similarly to how we defined ESG-coherent risk measures, we define ESG-coherent reward measures. As discussed by \cite{rachev2008desirable}, in the univariate case, a reward measure can be defined as any functional defined on the space of the random variables of interest, that is, it should be isotonic with market preferences. Still, it is useful to formalize axiomatically the characteristics of reward measures. We now introduce axioms for multivariate reward measures inspired by the ones for the univariate case discussed in \cite{rachev2008desirable}. Considering a probability space $(\Omega, \mathcal{F}, P)$ and a set of bivariate random variables $\mathcal{X}_2$, we can define an ESG-adjusted reward measure $\bftheta_{\lambda}(X_T) : \mathcal{X}_2 \rightarrow \mathbb{R} \cup \{\pm \infty\}$, where $\lambda \in [0,1]$ is then ESG-coherent if the following axioms are satisfied:

\begin{itemize}[align = left]
	\item[(SUP-M+)] Super-additivity: if $X_{1,T} ,X_{2,T} \in \mathcal{X}_2$, then $\bftheta_\lambda(X_{1,T}+X_{2,T}) \geq \bftheta_\lambda(X_{1,T}) + \bftheta_\lambda(X_{2,T})$;
	\item [(PH-M+)]Positive Homogeneity: if $\beta\geq 0$ and $X_T \in \mathcal{X}_2$, then $\bftheta_\lambda(\beta X_T) = \beta \bftheta_\lambda(X_T)$;
	\item[(MO-M+)] Monotonicity: if $X_{1,T},X_{2,T} \in \mathcal{X}_2$ and ($r_{1,T}\leq r_{2,T} \wedge \esg_{1,T}\leq \esg_{2,T}$) a.s., then $\bftheta_\lambda(X_{1,T}) \leq \bftheta_\lambda(X_{2,T})$;
	\item [(LH-M+)] Lambda Homogeneity: if $a = [a_1, \;a_2]' \in \mathbb{R}^2$, then $\bftheta_\lambda(a) = (1-\lambda)a_1 + \lambda a_2$.
\end{itemize}
\vspace{0.5cm}
The translation invariance property follows directly from SUP-M+ and LH-M+, analogously to Proposition \ref{prop:TI}.

\begin{itemize}[align = left]
		\item [(TI-M+)] Translation Invariance: if $a = [a_1, \;a_2]'\in \mathbb{R}^2$, $\bftheta_\lambda(X_T+a)=\bftheta_\lambda(X_T) +  (1-\lambda)a_1 + \lambda a_2$.
\end{itemize}

\section{Proof of dual representation of ESG-coherent risk measures}\label{sec:proof_dual}

For convenience define $Z :=[Z_1, \; Z_2]' = -X_T$ where $X_T$ is the bivariate vector of monetary returns and ESG. Consider the space  $\Zc=\Lc_p(\Omega,\Fc,P;\Rb^2)$, $p\in [1,\infty]$.  When $p < \infty$, the dual space $\Zc^*$ is isomorphic to $\Lc_q(\Omega,\Fc,P;\Rb^2)$, where $q\in(1,\infty]$ is such that
$1/p+1/q=1$ with $q=\infty$ for $p=1$.


Consider the bilinear form $\langle\cdot,\cdot\rangle$ on the product $\Zc\times\Zc^*$, which is defined as follows. For $Z\in \Zc$ and $\zeta\in \Zc^*$ the value of the bilinear form is given by

\begin{equation}
\langle \zeta,Z \rangle  =\int_\Omega \left(\zeta_1(\omega)Z_1(\omega)+ \zeta_2(\omega)Z_2(\omega)\right) P(d\omega). \label{eq:bilinear1}
\end{equation}
The form provides the corresponding continuous linear functionals on $\Zc$ and $\Zc^*$ when equipped with appropriate topologies. For each fixed $\zeta\in \Zc^*$, the mapping $Z\mapsto \langle \zeta,Z \rangle $ is a continuous linear functional on $\Zc$, equipped with the norm topology. For $p\in (0,1),$ all continuous linear functional on $\Zc^*$ have this form. For $p=1$, we equip $Z^*$ with the weak$^*$ topology. 
 For $p = \infty$, the dual space $\Zc^*$ is formed by finitely-additive measures and it is inconvenient to work with.
 In this case, we pair $\Zc = \Lc_\infty(\Omega, \Fc,P,\Rb^2)$ with $\Lc_1(\Omega, \Fc,P,\Rb^2)$ and equip the latter space with its norm topology and the former with its weak$^*$ topology. We use the bilinear form \eqref{eq:bilinear1} with $Z \in \Lc_\infty(\Omega, \Fc,P,\Rb^2)$ and $\zeta \in \Lc_1(\Omega, \Fc,P,\Rb^2)$.
 
 Let $\bfrho_\lambda:\Zc\to\Rb\cup\{+\infty\}$ be a lower-semi-continuous functional with non-empty domain. In the case of $p = \infty$ we make the additional assumption that $\bfrho_\lambda$ is lower-semicontinuous with respect to its weak$^*$ topology.

The Fenchel conjugate function $\bfrho_\lambda^*:\Zc^*\to\overline{\Rb}$ of the risk measure $\bfrho_\lambda$ is defined as
\begin{equation*}
\bfrho_\lambda^*(\zeta)=\sup_{Z\in \Zc}\big\{\langle \zeta,Z\rangle-\bfrho_\lambda(Z)\big\},
\end{equation*}
and the conjugate of $\bfrho_\lambda^*$ (the bi-conjugate function) is defined as
\[
\bfrho_\lambda^{**}(Z)=\sup_{\zeta\in \Zc^*}\big\{\langle\zeta,Z\rangle-\bfrho_\lambda^*(\zeta)\big\}.
\]
$\Ac_{\bfrho_\lambda}$ denotes the domain of $\bfrho_\lambda^*.$ The Fenchel-Moreau Theorem, which is valid in paired spaces, implies that $\bfrho_\lambda^{**}(Z) =\bfrho_\lambda(Z)$ whenever $\bfrho_\lambda$ is proper, convex, and lower-semicontinuous.

\underline{Claim:} The MO-M property holds iff for all $\zeta\in\Ac_{\bfrho_\lambda}$, $\zeta\geq 0$ a.s.

Assume that the opposite is true. This means that there exists a set $\Delta\in\Fc$ with $P(\Delta)>0$ such that for $\omega\in\Delta$, we have $\zeta_i(\omega)< 0$ for $i=1$ or $i=2$. We define $\bar{Z}_i = \one_{\Delta \cap {\zeta_i<0}}$, where $\one_B$ is the indicator function of the event $B$. Take any $Z$ with support in $\Delta$ such that $\bfrho_\lambda(Z)$ is finite and and define  $Z_t:=Z-t\bar{Z}$. Then for $t\geq 0$, we have  that  $Z_t\leq Z$ componentwise, and
	$\bfrho_\lambda(Z_t)\leq \bfrho_\lambda(Z)$ by monotonicity. Consequently,
\[
\bfrho_\lambda^*(\zeta)\geq \sup_{t\in \Rb_+}\big\{\langle \zeta,Z_t\rangle-\bfrho_\lambda(Z_t)\big\}
	\geq \sup_{t\in \Rb_+}\big\{\langle
	\zeta,Z\rangle-t\langle\zeta,\bar{Z}\rangle  -\bfrho_\lambda(Z)\big\}.
\]
On the right-hand side, $\langle\zeta,\bar{Z}\rangle <0$ on $\Delta$ and zero otherwise, while the other terms under the supremum are fixed. Hence, the supremum is infinite and $\zeta\not\in\Ac_{\bfrho_\lambda}.$

Conversely, suppose that every $\zeta\in \Ac_{\bfrho_\lambda}$ is nonnegative. Then for every $\zeta\in\Ac_{\bfrho_\lambda}$ and $Z\geq Z'$ componentwise,
we have  
\[
\langle \zeta,Z'\rangle = \int_\Omega \left(\zeta_1(\omega)Z_1'(\omega)+ \zeta_2(\omega)Z_2'(\omega) \right)P(d\omega) \leq \int_\Omega \left(\zeta_1(\omega)Z_1(\omega)+ \zeta_2(\omega)Z_2(\omega) \right) P(d\omega)=  \langle \zeta,Z\rangle. 
\]
Consequently 
\[
\bfrho_\lambda(Z)=\sup_{\zeta\in \Zc^*}\big\{\langle
\zeta,Z\rangle-\bfrho_\lambda^*(\zeta)\big\}\geq \sup_{\zeta\in \Zc^*}\big\{\langle
\zeta,Z'\rangle-\bfrho_\lambda^*(\zeta)\big\} = \bfrho_\lambda(Z').
\]

\underline{Claim:} The PH-M property holds iff $\bfrho_\lambda$ is the support function of $\Ac_{\bfrho_\lambda}$.

Suppose that $\bfrho_\lambda(tZ)=t\bfrho_\lambda(Z)$ for all $Z\in\Zc$.
For all $t>0$ and for all $Z\in\Zc$
\[
\bfrho_\lambda^*(\zeta) = \sup_{Z\in \Zc}\big\{\langle \zeta,Z\rangle-\bfrho_\lambda(Z)\big\}
\geq \langle
\zeta,tZ\rangle -\bfrho_\lambda(tZ)
\]
Thus for all $t>0$ 
\[
\bfrho_\lambda^*(\zeta) = \sup_{Z\in \Zc}\big\{\langle \zeta,Z\rangle-\bfrho_\lambda(Z)\big\}
\geq \sup_{Z\in \Zc}\big\{\langle
\zeta,tZ\rangle -t\bfrho_\lambda(Z)\big\}=t\bfrho_\lambda^*(\zeta).
\]
Hence, if $\bfrho_\lambda^*(\zeta)$ is finite, then $\bfrho_\lambda^*(\zeta) = 0$ as claimed.
Furthermore,  
\[
	\bfrho_\lambda (0) = \sup_{\zeta\in \Zc^*}\big\{\langle \zeta,0\rangle-\bfrho_\lambda^*(\zeta)\big\} = 0.
\]
For the converse, if $\bfrho_\lambda(Z) = \sup_{\zeta\in\Ac_{\bfrho_\lambda}}\langle \zeta,Z\rangle$, then $\bfrho_\lambda$ is positively homogeneous as a support function of a convex set. 
 Hence when the PH-M property holds, the conjugate function is the indicator function of convex analysis of the set $\Ac_{\bfrho_\lambda}.$

\underline{Claim:}  Assume $\bfrho_\lambda$ is a proper, convex, and positively homogeneous risk functional. Then the risk measure is additive for any constant vectors $a, b \in \Rb$ (i.e. $\bfrho_\lambda(a+b) = \bfrho_\lambda(a) + \bfrho_\lambda(b)$) iff $\mu_\zeta = \int_\Omega \zeta(\omega)P(d\omega) = \mu$ and $\langle\bfone,\mu\rangle = \bfrho_\lambda(\bfone)$, for all $\zeta \in \Ac$, where $\bfone = [1, \;1]'$. Furthermore, for all $Z \in \Xc_2$ and $a \in \Rb^2$, 
$$\bfrho_{\lambda}(Z+a) = \bfrho_{\lambda}(Z) + \bfrho_{\lambda}(a). $$\\
Let us denote $\int_\Omega\zeta(\omega)P(d\omega) = \mu_\zeta$. If $\zeta \in \Ac_{\bfrho_\lambda}$ and SUB-M and PH-M hold, then for all $a \in \Rb^2$
\begin{equation}
	\bfrho_{\lambda}(a) = - \sup_{\zeta \in \Zc^*}\langle a, \mu_\zeta\rangle \label{eq:dual_constant}
\end{equation}
and $\bfrho_{\lambda}(\bfzero) = 0$, where $\bfzero = [0,\;0]'$. If $\mu = \mu_\zeta = \int_\Omega\zeta(\omega)P(d\omega)$ for all $\zeta \in \Ac$, then equation \ref{eq:dual_constant} implies that the risk measure is additive for any constant vectors. Assume now that the measure is additive for constant vectors. Hence,
$$0 = \bfrho_{\lambda}(\bfzero) =  \bfrho_{\lambda}(a - a) =   \bfrho_{\lambda}(a) +  \bfrho_{\lambda}(-a).$$

Consequently $\bfrho_{\lambda}$ is linear on constants and $\langle a, \mu_\zeta\rangle = \langle a,\mu\rangle$ with $\mu = \mu_\zeta$ for all $a \in \Rb^2$ and  for all $\zeta \in \Ac_{\bfrho_\lambda}$. Indeed,
$$ \sup_{\zeta \in \Zc^*}\langle a, \mu_\zeta\rangle = \sup_{\zeta \in \Zc^*}\langle -a, \mu_\zeta\rangle = 0$$
or equivalently
$$ \sup_{\zeta \in \Zc^*}\langle a, \mu_\zeta\rangle = \inf_{\zeta \in \Zc^*}\langle a, \mu_\zeta\rangle.$$
Hence $\langle a, \mu_\zeta\rangle$ is the same for all $\zeta \in \Ac_{\bfrho_\lambda}$. Since $a \in \Rb^2$ is arbitrary, we conclude that $\mu = \mu_\zeta$. For any $\zeta \in \Ac_{\bfrho_\lambda}$ we have
$$\int_\Omega\langle \bfone,\zeta(\omega)\rangle P(d\omega) = \langle\bfone,\mu_\zeta\rangle = \sup_{\zeta \in \Zc^*}\langle \bfone, \mu_\zeta\rangle = -\bfrho_\lambda(\bfone).$$
Let $Z \in \Zc$, $X_T = -Z$, and $a \in \Rb$.
\begin{align}
	\bfrho_\lambda(X_T+a) & = \sup_{\zeta \in \Zc^*}\left\{\int_\Omega\langle Z - a,\zeta(\omega)\rangle P(d\omega) \right\}  \notag\\
	  & = \sup_{\zeta \in \Zc^*}\left\{\int_\Omega\langle Z,\zeta(\omega)\rangle P(d\omega) - \langle a,\mu \rangle \right\}  = \bfrho_\lambda(X_T) + \bfrho_\lambda(a).\notag
\end{align}

\underline{Claim:} If additionally LH-M holds, then $\mu=  \big[(1-\lambda),\lambda\big]'$ and, hence, for all $\zeta\in\Ac_{\bfrho_\lambda}$, we have
\[
\int_\Omega \zeta_1(\omega)+ \zeta_2(\omega) P(d\omega) =1.
\]

In summary, when axioms SUB-M through LH-M are satisfied, then the dual representation of the risk measure is 
\begin{align}\label{eq:dual_appendix}
\bfrho_\lambda (X_T)& =\sup_{\zeta\in \Ac_{\bfrho_\lambda}} \left\{- \int_\Omega \left(\zeta_1(\omega) r_T (\omega)+ \zeta_2(\omega) \esg_T (\omega) \right) P(d\omega) \right\} \\
\bfrho_\lambda (X_T)& =-\inf_{\zeta\in \Ac_{\bfrho_\lambda}} \left\{ \int_\Omega \left( \zeta_1(\omega) r_T (\omega)+ \zeta_2(\omega) \esg_T (\omega) \right) P(d\omega) \right\} \notag
\end{align}
where $\Ac_{\bfrho_\lambda}$ contains non-negative functions $(\zeta_1(\omega),\zeta_2(\omega))$ on $\Rb^2$ whose expected value is \mbox{$\big[(1-\lambda), \quad \lambda\big]'$}.
Furthermore, $\Ac_{\bfrho_\lambda}$ is equal to the convex subdifferential of $\bfrho_\lambda ([0, 0]')$. Note that in \eqref{eq:dual_appendix} we adjusted the signs since we express the risk measure in terms of $X_T$ and not $Z=-X_T$.

\section{Proofs of dual representation for $\esgavar$ and $\esgavar^l$}\label{sec:proof_avar_dual}

Let $X_T=[r_T, \; \esg_T]' \in \Xc_2=\mathcal{L}_1(\Omega,\mathcal{F},P;\Rb^2)$ be a bivariate random variable associated with an asset, and  for convenience, define $Z :=[Z_1, \; Z_2]' = -X_T$ (i.e., the corresponding vector with inverted signs).

\textbf{Dual representation for $\esgavar$}

Here, we prove that $\esgavar(X_T)$ has the following dual representation:
\begin{gather*}
\esgavar(X_T) = \sup_{[\zeta_1 \; \zeta_2]' \in \Ac_{\esgavar}} \Eb[Z_1\zeta_1+Z_2 \zeta_2],
\quad \text{with}\\
\Ac_{\esgavar} = \left\{[\zeta_1, \; \zeta_2]'\in \mathcal{L}_\infty(\Omega,\mathcal{F},P;\Rb^2):\; [\zeta_1, \; \zeta_2]'=\xi[1-\lambda, \; \lambda]'; \; 0\leq \xi\leq \frac{1}{1 -\tau}\; \text{a.s. } \Eb[\xi]= 1\right\}.
\end{gather*}
We assume $\Xc_2=\mathcal{L}_1(\Omega,\mathcal{F},P;\Rb^2)$, which entails that the paired space is $\mathcal{L}_\infty(\Omega,\mathcal{F},P;\Rb^2)$. 
\begin{align*}
\esgavar(X_T) & = \min_{\beta\in \Rb}\Big\{ \frac{1}{1-\tau} \Eb\big[ \big(\beta-((1-\lambda)r+\lambda\esg)\big)_+\big] -\beta\Big\},\\
& = \min_{\beta\in \Rb}\Big\{\beta + \frac{1}{1-\tau} \Eb\big[ \big((1-\lambda)Z_1+\lambda Z_2-\beta\big)_+\big]\Big\},
\quad \tau\in (0,1).
\end{align*}
Using the rules of subdifferential calculus and Strassen's theorem \citep{strassen1965existence}, we get
\[
\partial  \Eb\big[ \big((1-\lambda)Z_1+\lambda Z_2-\beta\big)_+\big] = 
[1-\lambda, \; \lambda]'\xi,\quad \text{where } \xi(\omega) = \begin{cases}
1 & \text{ if } (1-\lambda)Z_1+\lambda Z_2 > \beta\\
0  & \text{ if } (1-\lambda)Z_1+\lambda Z_2 < \beta \\
[0,1] & \text{ if } (1-\lambda)Z_1+\lambda Z_2 = \beta.
\end{cases}
\]
Note that $\xi\in \mathcal{L}_\infty(\Omega,\mathcal{F},P)$ and $\xi\geq 0$. We define $\zeta\in \mathcal{L}_\infty(\Omega,\mathcal{F},P;\Rb^2)$ by setting $\zeta = [1-\lambda, \; \lambda]'\xi$ for any measurable selection $\xi\in \partial  \E\big[ \big((1-\lambda)r+\lambda\esg)\big)_+\big].$ Let $\tilde{\mathcal{A}}$ be the set containing all such elements $\zeta$. Evidently, we have $\bfzero \leq\zeta\leq [1-\lambda, \; \lambda]'$ a.s. for all. 
Using the subgradient inequality at $\bar{Z}= (\beta,\beta)$,  we obtain for all $Z$ 
\[
\E\big[ \big(((1-\lambda)Z_1+\lambda Z_2)-\beta\big)_+\big] \geq \langle \zeta, Z-\beta(1,1)\rangle = \langle \zeta, Z\rangle - \beta\E[\xi].
\]
On the other hand, for any $\zeta=[1-\lambda, \; \lambda]'\xi$
\begin{multline*}
\langle \zeta,Z\rangle =\langle [1-\lambda, \; \lambda]' \xi, Z-\beta(1,1)\rangle + \beta\langle  [1-\lambda, \; \lambda]' \xi, (1,1)\rangle \\
= \langle \xi,((1-\lambda)Z_1+\lambda Z_2 -\beta)\rangle +\beta\E[\xi]
  \leq  \E\big[ \big((1-\lambda)Z_1+\lambda Z_2 -\beta)_+\big] +\beta\E[\xi].
\end{multline*}
Hence, we can represent 
\[
\E\big[ \big((1-\lambda)Z_1+\lambda Z_2 -\beta)_+\big]  = \max_{\zeta\tilde{\Ac}} \Big(\langle \zeta,Z\rangle  - \beta\Eb[\xi]\Big). 
\]
Now, we can express the risk measure as follows:
\begin{align*}
\bfrho(Z) &= \min_{\beta\in \Rb}\bigg\{ \beta + \frac{1}{1-\tau} \max_{\zeta\in \tilde{\Ac}} \Big(\langle \zeta,Z\rangle  - \beta\Eb[\xi]\Big) \bigg\} \\ &= \max_{\zeta\in \tilde{\Ac}} \inf_{\beta\in \Rb}\bigg\{ \beta + \frac{1}{1-\tau} \langle \zeta,Z\rangle - \frac{\beta}{1-\tau}\Eb[\xi] \bigg\} = \max_{\zeta\in \tilde{\Ac}} \inf_{\beta\in \Rb}\bigg\{ \beta (1-\frac{1}{1-\tau}\Eb[\xi] ) + \frac{1}{1-\tau} \langle \zeta,Z\rangle\bigg\},
\end{align*}
where $\bfrho(Z) = \esgavar(X_T)$. The exchange of the ``$\min$'' and ``$\max$'' operations is possible, because the function in braces is bilinear in $(\beta,\zeta)$, and the set $\tilde{\Ac}$ is compact. 
The inner minimization with respect to $\beta$ yields $-\infty$, unless $\Eb[\xi]=1-\tau$. Consequently,
\[
\bfrho(Z) = \max_{{\zeta= \xi  [1-\lambda,\lambda]'\in \Ac}\atop {\Eb[\xi]=1-\tau}} \frac{1}{1-\tau} \langle Z,\zeta\rangle.
\]
Setting $\zeta' = \zeta/(1-\tau)$,  we obtain the support set
\[
\partial \bfrho(0) = \bigg\{\zeta \in \Lc_\infty(\Omega,\Fc,P;\Rb^2): \zeta = \frac{\xi}{1-\tau} [1-\lambda, \lambda]'; \xi\ge 0,\ \Eb[\xi]= 1-\tau, \ \|\xi\|_\infty \le 1\bigg\}.
\]

\textbf{Dual representation for $\esgavar^l$}

We now prove that the dual representation of $\esgavar^l(X_T)$ is defined as follows:
\begin{equation}
\esgavar^l(X_T) = \sup_{[\zeta_1, \; \zeta_2]' \in \Ac_{\esgavar}} \Eb[Z_1\zeta_1 + Z_2 \zeta_2],
\end{equation}

with $Z :=[Z_1, \; Z_2]' = -X_T$ and 
\begin{equation}
\Ac_{\esgavar^l} = \left\{[\zeta_1, \; \zeta_2]'\in \Lc_\infty(\Omega,\Fc,P;\Rb^2): \Eb[\zeta_1]=1-\lambda;\Eb[\zeta_2]=\lambda;\zeta_1,\zeta_2\geq 0;\zeta_1\leq\frac{1-\lambda}{1-\tau};\zeta_2\leq\frac{\lambda}{1-\tau}\right\}.
\end{equation}

We use the known representation of Average Value at Risk for scalar random variables (cf. \citealp[Example 6.19 eq. 6.76]{shapiro2021lectures}). 
Denote the dual set in that representation by $\Ac^\prime$, i.e.,  
\begin{equation*}
\Ac^\prime = \left\{\xi\in\mathcal{L}_\infty(\Omega,\Fc,P):\; \Eb[\xi]=1;\; 0\leq \xi \leq\frac{1}{1-\tau}\; \text{a.s.}\right\},\notag\\
\end{equation*}
Hence
\begin{align}
\esgavar^l(X) 
 &=(1-\lambda)\sup_{\xi \in \Ac^\prime} \Eb[ \xi, Z_1 ] + \lambda\sup_{\xi \in \Ac^\prime} \Eb[ \xi, Z_2 ]\notag\\
 &=\sup_{\xi \in \Ac^\prime} \Eb[ (1-\lambda)\xi, Z_1 ] + \sup_{\xi \in \Ac^\prime} \Eb[ \lambda\xi, Z_2 ]\notag\\
 &=\sup_{[\xi_1,\xi_2]' \in \Ac^\prime\times\Ac^\prime}\Big( \Eb[ (1-\lambda)\xi_1, Z_1 ] +\Eb[ \lambda\xi_2, Z_2 ]\Big).
\end{align}
Now, we define the set $\Ac_{\esgavar^l}= (1-\lambda)\Ac^\prime\times\lambda\Ac^\prime$ and continue the last chain of equations as follows:
\begin{align*}
\esgavar^l(X_T) 
 &=\sup_{[\xi_1,\xi_2]' \in \Ac^\prime\times\Ac^\prime} \Eb[(1-\lambda)\xi, Z_1] +\Eb[ \lambda\xi, Z_2 ]= \sup_{\zeta\in \Ac_{\esgavar^l} }\langle \zeta, Z \rangle. 
\end{align*}
This concludes the proof.

\pagebreak
\begin{landscape}
\section{Company rankings based on ESG risk, reward, and RRR measures}\label{sec:tables_rank}
\vspace{-0.3in}
\begin{table}[!h]		
	\begin{tiny}
	\caption{Top and bottom ranking according to several ESG risk and reward measures ($\esgd \avar_{\lambda, 0.95\%}$,
			ESG-standard deviation, ESG-mean) for selected values of $\lambda$.}
	\label{tab:ranking1}
		\begin{tabular}{c LLLLL}
			\toprule
			\multicolumn{6}{c}{ranking by $\esgd \avar_{\lambda, 0.95\%}$}\\
			rank & $\lambda = 0$                             & $\lambda = 0.25$                          & $\lambda = 0.5$                           & $\lambda = 0.75$             & $\lambda = 1$                     \\
			\midrule
			1  & Boeing Company                  &  Boeing Company                  & Boeing Company                 &  Boeing Company                  & Walgreens Boots Alliance, Inc.              \\
			2  & Intel Corporation               &  Intel Corporation               & Intel Corporation              &  Intel Corporation               & Travelers Companies, Inc.                   \\
			3  & Salesforce, Inc.                &  Salesforce, Inc.                & Salesforce, Inc.               &  Walgreens Boots Alliance, Inc.  & Goldman Sachs Group, Inc.                   \\
			4  & American Express Company        &  American Express Company        & American Express Company       &  American Express Company        & 3M Company                                  \\
			5  & Walgreens Boots Alliance, Inc.  &  Walgreens Boots Alliance, Inc.  & Walgreens Boots Alliance, Inc. &  Salesforce, Inc.                & Johnson \& Johnson                           \\
			\multicolumn{6}{c}{\strut}\\
			24 & McDonald's Corporation          &  McDonald's Corporation          & McDonald's Corporation         &  Coca-Cola Company               & Salesforce, Inc.                            \\
			25 & Walmart Inc.                    &  Walmart Inc.                    & Walmart Inc.                   &  McDonald's Corporation          & IBM Corporation \\
			26 & Procter \& Gamble Company       &  Procter \& Gamble Company       & Procter \& Gamble Company      &  Johnson \& Johnson              & Caterpillar Inc.                            \\
			27 & Verizon Communications Inc.     &  Verizon Communications Inc.     & Verizon Communications Inc.    &  Procter \& Gamble Company       & Coca-Cola Company                           \\
			28 & Johnson \& Johnson              &  Johnson \& Johnson              & Johnson \& Johnson             &  Verizon Communications Inc.     & Honeywell International Inc.                \\
			\multicolumn{6}{c}{\strut}\\	\toprule
			\multicolumn{6}{c}{ranking by ESG-standard deviation}\\
			rank & $\lambda = 0$                             & $\lambda = 0.25$                          & $\lambda = 0.5$                           & $\lambda = 0.75$             & $\lambda = 1$                     \\
			\midrule
			1  & Boeing Company              &  Boeing Company              &  Boeing Company               & Boeing Company              &  Travelers Companies, Inc.                   \\
			2  & Intel Corporation           &  Intel Corporation           &  Intel Corporation            & Intel Corporation           &  Walgreens Boots Alliance, Inc.              \\
			3  & American Express Company    &  American Express Company    &  American Express Company     & American Express Company    &  3M Company                                  \\
			4  & Salesforce, Inc.            &  Salesforce, Inc.            &  Salesforce, Inc.             & Salesforce, Inc.            &  Amgen Inc.                                  \\
			5  & Chevron Corporation         &  Chevron Corporation         &  Chevron Corporation          & Chevron Corporation         &  American Express Company                    \\
			\multicolumn{6}{c}{\strut}\\
			24 & Walmart Inc.                &  Walmart Inc.                &  Walmart Inc.                 & Walmart Inc.                &  Microsoft Corporation                       \\
			25 & Coca-Cola Company           &  Coca-Cola Company           &  Coca-Cola Company            & Coca-Cola Company           &  Apple Inc.                                  \\
			26 & Procter \& Gamble Company   &  Procter \& Gamble Company   &  Procter \& Gamble Company    & Procter \& Gamble Company   &  Merck \& Co., Inc.                           \\
			27 & Verizon Communications Inc. &  Verizon Communications Inc. &  Verizon Communications Inc.  & Verizon Communications Inc. &  IBM Corporation \\
			28 & Johnson \& Johnson          &  Johnson \& Johnson          &  Johnson \& Johnson           & Johnson \& Johnson          &  Coca-Cola Company                           \\
			\multicolumn{6}{c}{\strut}\\		\toprule
			\multicolumn{6}{c}{ranking by ESG-mean}\\
			rank & $\lambda = 0$                             & $\lambda = 0.25$                          & $\lambda = 0.5$                           & $\lambda = 0.75$             & $\lambda = 1$                     \\
			\midrule
			1  & Apple Inc.                       & Caterpillar Inc.                 & Caterpillar Inc.                 & Honeywell International Inc.     & Honeywell International Inc.   \\
			2  & Microsoft Corporation            & Apple Inc.                       & Honeywell International Inc.     & Caterpillar Inc.                 & Caterpillar Inc.               \\
			3  & Caterpillar Inc.                 & Microsoft Corporation            & Salesforce, Inc.                 & Coca-Cola Company                & Coca-Cola Company              \\
			4  & UnitedHealth Group Inc.  & Salesforce, Inc.                 & Procter \& Gamble Company         & Salesforce, Inc.                 & Procter \& Gamble Company       \\
			5  & Goldman Sachs Group, Inc.        & Honeywell International Inc.     & Coca-Cola Company                & Procter \& Gamble Company         & Salesforce, Inc.               \\
			\multicolumn{6}{c}{\strut}\\
			24 & Boeing Company                   & Boeing Company                   & Boeing Company                   & 3M Company                       & JPMorgan Chase \& Co.           \\
			25 & Walt Disney Company              & Verizon Communications Inc.      & Johnson \& Johnson                & Walt Disney Company              & Goldman Sachs Group, Inc.      \\
			26 & 3M Company                       & 3M Company                       & 3M Company                       & Travelers Companies, Inc.        & Travelers Companies, Inc.      \\
			27 & Verizon Communications Inc.      & Walt Disney Company              & Walt Disney Company              & Johnson \& Johnson                & Johnson \& Johnson              \\
			28 & Walgreens Boots Alliance, Inc.   & Walgreens Boots Alliance, Inc.   & Walgreens Boots Alliance, Inc.   & Walgreens Boots Alliance, Inc.   & Walgreens Boots Alliance, Inc. \\
			\bottomrule
		\end{tabular}
	\end{tiny}
\end{table}
\end{landscape}

\pagebreak

\begin{landscape}
\begin{table}[!h]		
	\begin{tiny}
	\caption{Top and bottom ranking according to several ESG-RRRs (ESG Sharpe ratio, $\esgd \starr_{\lambda, 0.95\%}$,
			$\esgd \rr_{\lambda, 0.95\%}$) for selected values of $\lambda$.}
	\label{tab:ranking2}	
		\begin{tabular}{c LLLLL}
			\toprule
			\multicolumn{6}{c}{ranking by ESG Sharpe ratio}\\
			rank & $\lambda = 0$                             & $\lambda = 0.25$                          & $\lambda = 0.5$                           & $\lambda = 0.75$             & $\lambda = 1$                     \\ 
			\midrule
			1  & Apple Inc.                         & Caterpillar Inc.                  & Honeywell International Inc.                  & Honeywell International Inc.                & Coca-Cola Company                           \\
			2  & Microsoft Corporation              & Apple Inc.                        & Caterpillar Inc.                              & Coca-Cola Company                           & Honeywell International Inc.                \\
			3  & Caterpillar Inc.                   & Microsoft Corporation             & Procter \& Gamble Company                      & Procter \& Gamble Company                    & Caterpillar Inc.                            \\
			4  & UnitedHealth Group Inc.    & Honeywell International Inc.      & Coca-Cola Company                             & Caterpillar Inc.                          IBM  & IBM Corporation \\
			5  & McDonald's Corporation             & McDonald's Corporation            & IBM Corporation   & IBM Corporation & Salesforce, Inc.                            \\
			\multicolumn{6}{c}{\strut}\\
			24 & Intel Corporation                  & Johnson \& Johnson                 & Boeing Company                                & Walt Disney Company                         & JPMorgan Chase \& Co.                        \\
			25 & Walt Disney Company                & Walt Disney Company               & Walt Disney Company                           & 3M Company                                  & Goldman Sachs Group, Inc.                   \\
			26 & 3M Company                         & Verizon Communications Inc.       & 3M Company                                    & Travelers Companies, Inc.                   & Travelers Companies, Inc.                   \\
			27 & Walgreens Boots Alliance, Inc.     & 3M Company                        & Johnson \& Johnson                             & Walgreens Boots Alliance, Inc.              & Walgreens Boots Alliance, Inc.              \\
			28 & Verizon Communications Inc.        & Walgreens Boots Alliance, Inc.    & Walgreens Boots Alliance, Inc.                & Johnson \& Johnson                           & Johnson \& Johnson                           \\
			\multicolumn{6}{c}{\strut}\\	\toprule
			\multicolumn{6}{c}{ranking by $\esgd \starr_{\lambda, 0.95\%}$}\\
			rank & $\lambda = 0$                             & $\lambda = 0.25$                          & $\lambda = 0.5$                           & $\lambda = 0.75$             & $\lambda = 1$                     \\
			\midrule
			1  & Verizon Communications Inc.       & Walgreens Boots Alliance, Inc.   & Walgreens Boots Alliance, Inc.               & Johnson \& Johnson                             & IBM Corporation \\
			2  & Walgreens Boots Alliance, Inc.    & 3M Company                       & Johnson \& Johnson                            & Walgreens Boots Alliance, Inc.                & Caterpillar Inc.                            \\
			3  & 3M Company                        & Verizon Communications Inc.      & 3M Company                                   & Travelers Companies, Inc.                     & Honeywell International Inc.                \\
			4  & Walt Disney Company               & Walt Disney Company              & Walt Disney Company                          & 3M Company                                    & Coca-Cola Company                           \\
			5  & Intel Corporation                 & Johnson \& Johnson                & Boeing Company                               & Walt Disney Company                           & Johnson \& Johnson                           \\
			\multicolumn{6}{c}{\strut}\\
			24 & McDonald's Corporation            & McDonald's Corporation           & IBM Corporation  & IBM Corporation   & Cisco Systems, Inc.                         \\
			25 & UnitedHealth Group Inc.   & Honeywell International Inc.     & Coca-Cola Company                            & Caterpillar Inc.                              & Procter \& Gamble Company                    \\
			26 & Caterpillar Inc.                  & Microsoft Corporation            & Procter \& Gamble Company                     & Coca-Cola Company                             & Merck \& Co., Inc.                           \\
			27 & Microsoft Corporation             & Apple Inc.                       & Caterpillar Inc.                             & Procter \& Gamble Company                      & Intel Corporation                           \\
			28 & Apple Inc.                        & Caterpillar Inc.                 & Honeywell International Inc.                 & Honeywell International Inc.                  & Salesforce, Inc.                            \\
			\multicolumn{6}{c}{\strut}\\	\toprule
			\multicolumn{6}{c}{ranking by $\esgd \rr_{\lambda, 0.95\%}$}\\
			rank & $\lambda = 0$                             & $\lambda = 0.25$                          & $\lambda = 0.5$                           & $\lambda = 0.75$             & $\lambda = 1$                     \\
			\midrule
			1  & Amgen Inc.                                    & Amgen Inc.                                    & Amgen Inc.                        & Amgen Inc.                       & Salesforce, Inc.                            \\
			2  & UnitedHealth Group Inc.               & UnitedHealth Group Inc.               & UnitedHealth Group Inc.   & Honeywell International Inc.     & Intel Corporation                           \\
			3  & JPMorgan Chase \& Co.                          & JPMorgan Chase \& Co.                          & Caterpillar Inc.                  & Caterpillar Inc.                 & Merck \& Co., Inc.                           \\
			4  & Boeing Company                                & American Express Company                      & JPMorgan Chase \& Co.              & UnitedHealth Group Inc.  & Procter \& Gamble Company                    \\
			5  & American Express Company                      & Boeing Company                                & American Express Company          & Procter \& Gamble Company         & Verizon Communications Inc.                 \\
			\multicolumn{6}{c}{\strut}\\
			24 & Procter \& Gamble Company                      & Walgreens Boots Alliance, Inc.                & Coca-Cola Company                 & Cisco Systems, Inc.              & Johnson \& Johnson                           \\
			25 & IBM Corporation   & IBM Corporation   & 3M Company                        & Johnson \& Johnson                & Coca-Cola Company                           \\
			26 & Cisco Systems, Inc.                           & Coca-Cola Company                             & Walgreens Boots Alliance, Inc.    & 3M Company                       & Honeywell International Inc.                \\
			27 & Coca-Cola Company                             & Cisco Systems, Inc.                           & Cisco Systems, Inc.               & Home Depot, Inc.                 & Caterpillar Inc.                            \\
			28 & Home Depot, Inc.                              & Home Depot, Inc.                              & Home Depot, Inc.                  & Walgreens Boots Alliance, Inc.   & IBM Corporation \\
			\bottomrule
		\end{tabular}
	\end{tiny}
\end{table}
\end{landscape}

\end{appendices}

\pagebreak

\bibliography{bibliography.bib}
\bibliographystyle{apalike}

\end{document}